\documentclass[letterpaper, 10 pt, conference]{ieeeconf} 
\IEEEoverridecommandlockouts                              
\usepackage[english]{babel}

\usepackage{amsmath, amssymb}
\usepackage{enumerate}
\usepackage{graphicx}  
\usepackage[dvipsnames]{xcolor}

\usepackage{subcaption}  

\usepackage{algorithm}
\usepackage{algpseudocode}
\algrenewcommand\algorithmicrequire{\textbf{Input:}}
\algrenewcommand\algorithmicensure{\textbf{Output:}}

\usepackage[overload]{empheq}
\usepackage[noadjust]{cite}

\usepackage[hyperindex=true, breaklinks=true,colorlinks=true,bookmarks=false, pdftitle={Towards Closed-loop Stability of Nonlinear Receding Horizon Games},pdfauthor={Authors}, pagebackref=false, plainpages=false, pdfpagelabels, linkcolor=black, citecolor=black, filecolor=black, urlcolor=black]{hyperref}

\definecolor{lightgray}{gray}{0.9}
\definecolor{ggreen}{rgb}{0,0.5,0}
\definecolor{rred}{rgb}{0.5,0,0}

\newcommand{\bb}[1]{\mathbb{#1}}

\newcommand{\mc}[1]{\mathcal{#1}}

\newcommand{\R}{\mathbb{R}}
\newcommand{\x}{\mathbf{x}}

\newcommand{\eDef}{$\hfill\blacksquare$}

\DeclareMathOperator{\subjectto}{\ {s.}{t.}\ }
\DeclareMathOperator{\dist}{dist}
\DeclareMathOperator{\inte}{int}

\newtheorem{ass}{Assumption}
\newtheorem{dfn}{Definition}
\newtheorem{lmm}{Lemma}
\newtheorem{prp}{Proposition}
\newtheorem{thm}{Theorem}
\newtheorem{rmk}{Remark}

\newenvironment{proof}{\paragraph*{Proof}}{\hfill$\square$}

\newcommand{\sh}[1]{\textcolor{NavyBlue}{[#1]\raise 0.5ex \hbox{\footnotesize{SH}}}}
\newcommand{\CHECK}[1]{\textcolor{Red}{[#1]\raise 0.5ex \hbox{\footnotesize{CHECK}}}}
\newcommand{\tf}[1]{\textcolor{Orange}{[#1]\raise 0.5ex \hbox{\footnotesize{TF}}}}

\graphicspath{{figs/}}

\title{\LARGE \bf Towards Closed-loop Stability of Nonlinear Receding Horizon Games} 
\author{Sophie Hall, Florian Dörfler, Timm Faulwasser
\thanks{The authors are with the  Automatic Control Lab, ETH Z\"urich, Physikstrasse 3, 8092 Z\"urich, {Emails: \texttt{\{shall, dorfler\}@ethz.ch}}
, Switzerland and with the Institute of Control Systems, Hamburg University of Technology, Harburger Schlo{\ss}stra{\ss}e 22a, 21079 Hamburg, Germany, {Email: \texttt{timm.faulwasser@ieee.org}}. This work was supported by the Swiss National Science Foundation under the NCCR Automation (grant 51NF40 225155).}
}

\begin{document}

\maketitle
\thispagestyle{empty}
\pagestyle{empty}

\begin{abstract}
We analyze Receding Horizon Games without any MPC-like terminal ingredients. We show that recursive feasibility can be inferred from the turnpike phenomenon under mild assumptions. Moreover, we prove sufficient conditions for practical asymptotic convergence of the closed-loop trajectories, and we discuss how the gap towards practical asymptotic stability may be closed. We use numerical examples to show that the closed-loop region of attraction around the steady-state GNE shrinks exponentially with the horizon length, a behavior previously known only for model predictive control. Further, we apply a linear end penalty and demonstrate in numerical simulations that it suppresses the leaving arc and ensures asymptotic convergence to the steady-state GNE.
\end{abstract}

\section{Introduction}

Receding Horizon Games (also known as game-theoretic MPC) is a control framework combining dynamic game theory with receding horizon planning, modeling interactions between self-interested agents coupled through costs, dynamics, and constraints. Such interactions arise naturally in competitive settings, e.g., supply chains where manufacturers repeatedly plan over a prediction horizon, observe competitors' decisions, and re-plan \cite{hall2024receding}. Being a natural model for real-life competitive interactions, RHGs have been employed in 
robotics~\cite{gu2008differential}, autonomous driving~\cite{liniger2020noncooperative, wang2021game, lecleach2022algames}, electric vehicle charging~\cite{mignoni2023distributed}, and smart grids~\cite{paola2018distributed, hall2022receding}.


The underlying solution concept of RHGs is the generalized Nash equilibrium (GNE) \cite{facchinei2009generalized}, an equilibrium at which no agent can unilaterally improve its cost (Nash), and where each agent's feasible set depends on the actions of others (generalized Nash). GNEs can generally be formulated as quasi-variational inequalities. When the variational GNE is sought, which imposes equal shadow prices on shared constraints, the problem reduces to a variational inequality \cite{facchinei2009nash,kulkarni2012variational}. As GNEs originated in the economics and game theory community, they were initially a static solution concept. Only recently, motivated by planning and control applications, have \textit{dynamic} GNEs been developed~\cite{atzeni2013demand, benenati2023optimal, lecleach2022algames}.


A fundamental concept in optimal control to derive system-theoretic properties of dynamic trajectories has been dissipativity of open systems as introduced by Jan C. Willems~\cite{Willems72a} and its link to the turnpike phenomenon. This phenomenon is characterized by the solutions of OCPs clustering near one common steady state across varying initial conditions and horizon lengths~\cite{Mckenzie76,faulwasser2022turnpike}. The link between turnpikes and dissipativity enabled major breakthroughs in the closed-loop stability analysis of nonlinear and economic MPC, see, e.g., \cite{gruene2013economic}. Early works exploiting dissipativity for closed-loop analysis include \cite{diehl2011lyapunov, angeli2012average}; an overview is given in \cite{faulwasser2018economic}. The close relations between turnpikes and dissipativity notions were made explicit in~\cite{gruene2016relation} for discrete-time and in~\cite{epfl:faulwasser15h} for continuous-time systems.

The study of turnpike phenomena in games commenced in the 1980s~\cite{fershtman1986turnpike}, with early results for infinite-horizon open-loop games addressing existence, uniqueness, and convergence to the turnpike~\cite{carlson1995turnpike, carlson1996turnpike, carlson2000infinite}. Recently, turnpike properties in games have received renewed attention, including results for stochastic differential LQ games~\cite{li2025turnpike, cohen2025turnpike} and mean field games under large population assumptions~\cite{cirant2021long, carmona2024leveraging, ersland2025long, fedorov2025studying}. In \cite{hall2025system}, we developed an analysis of turnpike and dissipativity properties in noncooperative games for a general class of nonlinear costs, coupled constraints, and nonlinear dynamics.  

While the above results characterize open-loop dynamic game trajectories, the fundamental question in receding horizon games is whether such trajectories lead to stable closed-loop systems. Closed-loop stability of game-theoretic MPC has been proven under a potential game assumption in~\cite{hall2022receding}. The first results for the non-potential linear-quadratic case have been presented in~\cite{hall2025stability}, applicable to pre-stabilized LTI systems. Following this, \cite{benenati2025linear} proved stability for LQ games under both open-loop and feedback Nash equilibria.


In this paper, we present the first closed-loop stability results for nonlinear RHG with state and input constraints, building on the connection between turnpike and dissipativity theory established in \cite{hall2025system}. We prove recursive feasibility of the RHG feedback law under a cheap reachability and local controllability assumption. Based on this, we show practical asymptotic convergence to the steady-state GNE using a Lyapunov candidate defined as the sum of an agent performance measure and the storage function. Convergence can be strengthened to practical asymptotic stability given an upper bound on the Lyapunov candidate. Using numerical examples, we characterize the convergence neighborhood to the steady-state GNE and show that by applying a terminal penalty, exact asymptotic convergence can be achieved.

\subsection*{Notation} We denote by $\bb{Z}_K= \{0,\dots, K\!-\!1\}$ the sequence of the first $K$ non-negative integers. Given $M$ vectors $u^1, ..., u^M$, we denote  by  $ u = \text{col}(u^v)_{v=1}^M:= [(u^1)^\top, \ldots, (u^M)^\top]^\top$ the stacked vector of vectors $u^v$, where $u^v$ is the decision vector of agent $v$, and of all other agents as $u^{-v} = \text{col}(\{u^{j}\}_{j\in \mc{V}\backslash v})$. Our use of class $\mc{K}$, $\mc{L}$, and $\mc{K}_\infty$ comparison functions follows standard conventions~\cite{Kellett14}. For a finite set $\mc{Q}$, we denote its cardinality by $\#\mc{Q}$. Let $\mc B_\varepsilon(\x) \subset \R^n$ denote a closed ball of radius $\varepsilon$, centered at $\x$, and $\x\in \R^{n_x}$ refers to a point and not a trajectory.

\section{Nonlinear Receding Horizon Games}

We consider a group of self-interested agents $v\in \mc{V}:=\{1,\dots,M\}$ which controls the following shared dynamics
\begin{equation}\label{eq:Dynamics}
    x_{t+1} = f(x_t, u_t^v,u_t^{-v}).
\end{equation}
Each agent minimizes its accumulated stage cost $\ell^v$ over a prediction horizon $N$ with states and inputs coupled to other agents. This constitutes a \emph{finite-horizon Generalized Nash Equilibrium problem} (GNEP) as follows
\begin{subequations}
\label{eq:MPCPerAgent}
\begin{empheq}[left=\forall v\in \mc{V}:  \empheqlbrace]{align}
\label{eq:RunningCost}
\displaystyle \min_{u^v,\, x}  &\;\sum_{k= 0}^{N-1} \ell^v(x_k, u_k^v,u_k^{-v})\\
\textrm{s.t.} \quad &  x_{k+1} =  f(x_k,u_k^v,u_k^{-v})  \hspace{1.25em}  k \in \bb{Z}_{N} \label{eq:Constr1}\\
&g(x_k,u_k^v, u^{-v}_k) \leq 0, \hspace{2.5em} k \in \bb{Z}_{N} \label{eq:Constr2}\\
  & h^v(u_k^v)\leq 0,      \hspace{5.6em} k \in \bb{Z}_{N} \label{eq:Constr3}\\ 
&\; x_0 = \x, \label{eq:Constr4}
\end{empheq}
\end{subequations}
with initial condition $\x$ and nonlinear coupled and local constraints~\eqref{eq:Constr2}-\eqref{eq:Constr3}. Note that  in the following we interchangeably use the notation $f(x_k,u_k^v,u_k^{-v})$ and $f(x_k,u_k)$ for all functions and sets, whereby $f(x_k,u_k)$ is a shorthand referring to the entire population of agents. We introduce the cumulative cost \[J_N^v(x,u^v, u^{-v}) :=\sum_{k= 0}^{N-1} \ell^v(x_k, u_k^v, u^{-v}_k),\]
the global feasible set
\begin{subequations}
\begin{align}
\mc{Z}_N = \{(x, u) \in \R^{(N+1)n_x+N n_u}~|~  \eqref{eq:Constr1} - \eqref{eq:Constr3}\},  
\end{align}
the per-agent and global feasible sets as a function of the initial condition $\x$
\begin{align}
&\mc{Z}_N^v( \x, u^{-v}) = \{(x,u^v)  ~|~ \eqref{eq:Constr1} - \eqref{eq:Constr4}\}, \\[5pt]  
&\mc{Z}_N(\x) = \{(x, u) \in \R^{(N+1)n_x+N n_u}~|~  \eqref{eq:Constr1} - \eqref{eq:Constr4}\} \label{eq:FeasibleSet},
\end{align}
\end{subequations}
and similarly $\mc{Z}_{\infty}^v$ and $\mc{Z}_{\infty}(\x)$ for the infinite-horizon setting. Clearly, $\mc{Z}_N(\x) \subset \mc{Z}_N$. The projection of $\mc Z_N$ onto the state-input space $\R^{n_x + n_u}$ is written as $\bb Z$, while the projection onto the state space $\R^{n_x}$ is denoted as $\bb Z_x$. 

Decisions that jointly solve~\eqref{eq:MPCPerAgent} are called \textit{generalized Nash equilibria} (GNE)~\cite[\S 2]{facchinei2009nash}. Intuitively, at a GNE no agent $v \in \mc{V}$ can reduce its cost by unilaterally changing its own decision as defined next.
\begin{dfn}[Generalized Nash equilibrium] \label{dfn:GNE} ~\\A joint decision $(x^*, u^*)\in \mc{Z}_N(\x)$ is a GNE  of~\eqref{eq:MPCPerAgent} if 
\begin{align*}
\forall v\in \mc{V}: \, J_N^v(x^{*}, u^{v*}, u^{-v*} ) \leq J_N^v(x, u^{v}, u^{-v*}) 
\end{align*} holds for all $(x,u^v) \in \mc{Z}_N( \x, u^{-v*})$. The corresponding solution set for fixed $N\in \bb{N}$ and $\x\in \bb{X}_0$ is denoted as
\[
(x^*,u^*) \in \mc{S}^{\text{\tiny GNE}}_N(\x) \subset \R^{(N+1)n_x + N n_u}.
\]
We refer to any $(x^*,u^*) \in \mc{S}^{\text{\tiny GNE}}_N(\x)$ as a \emph{game pair}.
\eDef \vspace{2mm}
\end{dfn}

We recursively solve the finite-horizon GNEP~\eqref{eq:MPCPerAgent}, and apply its solution in a receding-horizon fashion. Specifically, at each sampling time $t$, the agents compute the GNE of the game~\eqref{eq:MPCPerAgent} with measured state $\x_t$ and then apply the first element $u^*_{0|t}$ of the optimal control trajectory. This defines the implicit feedback policy (referred to as the \textit{receding-horizon game} (RHG) feedback law) 
\begin{align}\label{eq:FeedbackLaw}
u^*_{0|t} = \mathrm{col}(u_{0|t}^{*,v})_{v \in \mc{V}}  
=:  \mu^*(\x_t) = \kappa(\mc{S}^{\text{\tiny GNE}}_N(\x_t)),
\end{align}
where $\kappa$ is a deterministic discrete selection mechanism, returning a unique GNE out of the possibly infinite set $\mc{S}^{\text{\tiny GNE}}_N(\x_t)$ which  extracts the first element of the control sequence of each agent $v\in \mc{V}$, namely, $\{u_{{k=0|t}}^{*,v}\}_{v \in \mathcal{V}}$. This selection mechanism is necessary as we do not impose conditions which ensure uniqueness of the GNE in~\eqref{eq:MPCPerAgent}. Various selection mechanisms exist in the literature~\cite{benenati2023optimal, hall2025limits, hall2025solving}.  Whenever necessary, we clearly differentiate predicted trajectories as
%
\begin{align*}
    x^*_{k|t}(\x_t) \quad\text{and}\ \quad u^*_{k|t}(\x_t),
\end{align*}
where the subscript $\cdot_{k|t}$ refers to the instant $t$ at which the trajectory is computed and $k \in \{0, \dots, N\}$ is the prediction step. When we refer to the entire predicted sequence we denote it as $\cdot_{\cdot|t}$ and when clear from context, we suppress the dependence on the initial condition, i.e., $x^*_{k|t}$.

Subsequently, we study the asymptotics of the  closed-loop RHG dynamics, i.e., the dynamics of applying the feedback \eqref{eq:FeedbackLaw} to \eqref{eq:Dynamics}. In particular, we are interested in characterizing the closed-loop stability with respect to a steady-state GNE which is a twofold equilibrium: (i) a steady-state of~\eqref{eq:Dynamics}, i.e., $\bar x = f(\bar x, \bar u^v, \bar u^{-v})$; and (ii) a strategic (decision) equilibrium of the one-step GNEP in~\eqref{eq:MPCPerAgent} as defined next.

\begin{dfn}[Steady-state GNE] The pair $(x_s,u_s)$ is called a \emph{steady-state GNE} if it solves 
\begin{align}\label{eq:SteadyStateGNEP}
v\in \mc{V}: \left\{
\begin{array}{r l}
\displaystyle \min_{\bar{u}^v, \bar{x}} & \; \ell^v(\bar{x}, \bar{u}^v, \bar{u}^{-v}) \\ 
 \subjectto  &  f(\bar{x},\bar{u}^v,\bar{u}^{-v})- \bar{x}=0\\ 
            &  g(\bar{x},\bar{u}^v, \bar{u}^{-v}) \leq 0,\\
            &h^v(\bar u^v)\leq 0,\\
\end{array} 
\right.
\end{align}
with the corresponding solution set $\mc{S}^{\text{\tiny GNE}}_s \subset \R^{n_x + n_u}.$\eDef
\end{dfn}

\section{Turnpike and Dissipativity of RHGs}

The turnpike phenomenon refers to a similarity property of parametric optimal control problems and has been established as a crucial element in the closed-loop analysis of  model predictive control (MPC), see, e.g.~\cite{gruene2013economic}. In games, the turnpike phenomenon was first observed in economics in the 1980s~\cite{fershtman1986turnpike} and analyzed for infinite-horizon open-loop games in continuous time~\cite{carlson1995turnpike} and discrete time~\cite{carlson1996turnpike}. In the context of RHGs, turnpikes have been observed in competitive dynamic supply chains~\cite{hall2024receding}. In the following we recall some fundamental results connecting dissipativity of GNEPs to the turnpike property which were first presented in~\cite{hall2025system} and are the foundation of our closed-loop analysis. 

We introduce the following performance measure for the group of agents
\begin{equation}\label{eq:overallJ}
  J_N(x,u) :=\sum_{k= 0}^{N-1} \ell(x_k, u_k)
  = \sum_{k= 0}^{N-1} \sum_{v\in \mathcal V} \ell^v(x_k,u_k^v, u_k^{-v})
\end{equation}
as well as the 
\textit{game value function} $V_N^*:\bb X_0 \to \R$ which measures the performance of the agent population at $\x$ for the GNE trajectory $(x^*,u^*)\in \mc{S}^{\text{\tiny GNE}}_N(\x)$
\begin{equation}\label{eq:Vfun}
  \hspace{-1.5mm}  V^*_N(\x) := \sum_{k=0}^{N-1}\ell(x_k^*, u^*_k)
    =\sum_{k=0}^{N-1}  \sum_{v\in \mc V}\ell^v(x^*_k, u_k^{v*}, u_k^{-v*}).
\end{equation}
We refer to~\cite{hall2025system} for recent results analyzing the properties of the game value function. 
To make use of the connection between the turnpike property and dissipativity derived in~\cite{hall2025system}~we require the following assumptions.   
\begin{ass}[Cheap reachability on $\bb X_0$] \label{ass:cheap}
Let the set $\bb X_0$ be the largest subset of $\bb Z_x$ such that the following holds: For 
any initial condition $\x\in \bb X_0$ there exists an infinite-horizon feasible pair $(x,u) \in \mc{Z}_\infty(\x)$ 
    such that, for some $\delta \in\R$ and $\forall N\in \bb N$ it holds that
    \[J_N(x,u)\leq \; N \ell(x_s, u_s) + \delta.
    \]
    We assume that  $\bb X_0 \not= \emptyset$.
\eDef \vspace{2mm}
\end{ass}
Note that 
$(x,u)$ in Assumption~\ref{ass:cheap} does not need to be a game pair. 

Further, we denote as $\bb X_N(\bb{X}_0)\subseteq \R^{n_x}$\footnote{Specifically, if $\tilde x \in \bb X_N(\bb{X}_0)$ then $\exists \;\x \in \bb X_0$ such that $(x^*(\x),u^*(\x)) \in \mc{S}^{\text{\tiny GNE}}_N(\x) $  and $x^*(\x)$ passes through $\tilde x$ at least once.} the point-wise in time projection of $\mc{S}^{\text{\tiny GNE}}_N(\bb X_0)$ onto the states and similarly $\bb X_{\infty}(\bb{X}_0)$ for the infinite-horizon setting.

The optimal control problem for the entire agent population using the cost function \eqref{eq:overallJ} reads
  \begin{align}\label{eq:OCP}\;
        V^\diamond_N(\x) := \min_{u,x}\; J_N(x,u) 
        \subjectto (x,u)\in \mc{Z}_N(\x),
    \end{align}
where $ V^\diamond_N(\x)$ is the usual OCP value function. Henceforth we use the superscript $\cdot^\diamond$ to highlight optimal quantities obtained from solving the OCP \eqref{eq:OCP}, while the superscript $\cdot^{*}$ refers to solutions of the GNEP~\eqref{eq:MPCPerAgent}.
Indeed, the GNEP and OCP solutions generally do not coincide and clearly \[V_N^*(\x) \geq V_N^\diamond(\x).\] The maximal loss induced by the self-interested behavior of agents in the game-theoretic setting is also called the \textit{price of anarchy}. We impose an assumption of it being bounded.
\begin{ass}[Bounded price of anarchy] \label{ass:PoA}
For all $\x\in\bb X_0$, the price of anarchy satisfies
\[
\text{PoA}(\x) :=   \frac{ \sup_{(x^*, u^*) \in \mc{S}^{\text{\tiny GNE}}_N(\x)} J_N(x^*, u^*)}{V^\diamond_N(\x)}   \leq P < \infty
    \]
    and $0<\nu  \leq V^\diamond_N(\x) \leq V<\infty$ holds for any $N\in \bb{N}$.
\eDef\end{ass}
Note that this also implies $V^*_N(\x)\leq P\,  V^\diamond_N(\x)$ as $ V^*_N(\x) \leq \sup_{(x^*, u^*) \in \mc{S}^{\text{\tiny GNE}}_N(\x)}  J_N(x^*, u^*)$. 

\begin{lmm}[{\cite{hall2025system}}]\label{lem:linearBound}~\\
Suppose that $0 < \nu \leq V^\diamond_N(\x) \leq V< \infty$ is satisfied. 
\begin{itemize}
    \item[(i)]If Assumption~\ref{ass:PoA} holds with $P\in \R$, then 
 \begin{equation}\label{eq:linearBound}
     \sup_{(x^*, u^*) \in \mc{S}^{\text{\tiny GNE}}_N(\x)} J_N(x^*, u^*) - V^\diamond_N(\x) \leq VP =:\bar P.
 \end{equation} 
 \item[(ii)] If \eqref{eq:linearBound} holds with bound $VP = \bar P$, then Assumption~\ref{ass:PoA} holds with $\frac{V}{\nu}P +1$.\eDef
\end{itemize}
\end{lmm}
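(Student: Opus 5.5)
Both parts reduce to elementary algebra on the price-of-anarchy ratio, using only the standing bounds $0<\nu\le V^\diamond_N(\x)\le V<\infty$. Throughout, write $S_N(\x):=\sup_{(x^*,u^*)\in\mc{S}^{\text{\tiny GNE}}_N(\x)} J_N(x^*,u^*)$. Since $V^\diamond_N(\x)\ge\nu>0$, the quotient $\text{PoA}(\x)=S_N(\x)/V^\diamond_N(\x)$ is well defined.

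For (i), Assumption~\ref{ass:PoA} gives $S_N(\x)\le P\,V^\diamond_N(\x)$. Subtracting $V^\diamond_N(\x)$ from both sides yields
\begin{equation*}
S_N(\x)-V^\diamond_N(\x)\le (P-1)\,V^\diamond_N(\x).
\end{equation*}
Next I bound $(P-1)V^\diamond_N(\x)$. Every GNE is feasible for \eqref{eq:OCP}, so $S_N(\x)\ge V^\diamond_N(\x)$ and hence $P\ge\text{PoA}(\x)\ge 1$. In particular $P-1\ge0$, so I may use the upper bound $V^\diamond_N(\x)\le V$ to get $(P-1)V^\diamond_N(\x)\le (P-1)V\le PV=\bar P$. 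This holds uniformly in $\x$ and $N$, which proves \eqref{eq:linearBound}. The only point needing care is the sign of $P-1$. If one preferred not to use $P\ge1$, one could instead note that $S_N(\x)-V^\diamond_N(\x)\le S_N(\x)\le PV$ whenever $S_N(\x)\ge0$. The argument via $P\ge1$ is cleaner, so I would use it.

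For (ii), assume $S_N(\x)\le V^\diamond_N(\x)+\bar P$ with $\bar P=VP$. Dividing by $V^\diamond_N(\x)>0$ gives
\begin{equation*}
\text{PoA}(\x)\le 1+\frac{\bar P}{V^\diamond_N(\x)}\le 1+\frac{VP}{\nu}.
\end{equation*}
The last step uses the lower bound $V^\diamond_N(\x)\ge\nu$ together with $\bar P\ge0$. The latter holds because the left-hand side of \eqref{eq:linearBound} is nonnegative, as shown above. The resulting constant $\frac{V}{\nu}P+1$ is finite and independent of $\x$ and $N$, so Assumption~\ref{ass:PoA} holds with this constant, which is the claim.

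The main obstacle is bookkeeping of signs rather than any deep step. One must observe that $\text{PoA}\ge1$ and $\bar P\ge0$, both of which follow from $V^\diamond_N$ being the minimum over the feasible set $\mc{Z}_N(\x)$, which contains all game pairs. One must also check that the uniform bounds $\nu$ and $V$ are applied in the correct direction in each part: the upper bound $V$ in (i) and the lower bound $\nu$ in (ii).
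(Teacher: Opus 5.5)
Your proposal is correct, and it is the standard elementary argument behind the cited result. In (i) you multiply the ratio bound by $V^\diamond_N(\x)>0$ and use $V^\diamond_N(\x)\le V$; in (ii) you divide the additive bound by $V^\diamond_N(\x)$ and use $V^\diamond_N(\x)\ge\nu$. You also correctly secure the signs $P\ge 1$ and $\bar P\ge 0$ from the fact that every game pair lies in $\mc{Z}_N(\x)$. The one implicit step is that these sign facts need $\mc{S}^{\text{\tiny GNE}}_N(\x)\neq\emptyset$ for some $\x$; wherever the GNE set is empty the supremum is $-\infty$ and both inequalities hold trivially, so nothing is lost.
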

We note that, if $0 < \nu \leq V^\diamond_N(\x)$ does not hold for the stage cost $\ell$, there is a simple remedy, namely, adding a positive bounded offset ($\ell +c$) will ensure the lower bound on  $V^\diamond_N(\x)\geq \nu$ for any finite horizon $N$. Moreover, for any fixed horizon $N\in \bb{N}$, $V_N^\diamond$ will be finite. Hence assuming a finite upper bound $V\geq V^\diamond_N(\x)$ is not restrictive.

Using the performance measure of the agent population in~\eqref{eq:overallJ} we define a strict dissipativity notion for game-theoretic settings.
 \begin{dfn}[Strict dissipativity of GNEPs \cite{hall2025system}]\label{dfn:StrictDiss}
Given a steady-state GNE $(x_s,u_s)\in \mc{S}^{\text{\tiny GNE}}_s$, the GNEP~\eqref{eq:MPCPerAgent} is called strictly dissipative with supply rate 
\[s(x^*_k,u^*_k) := \ell(x^*_k,u^*_k) -\ell(x_s,u_s)\]
if
there exists a storage function $\Lambda: \bb X_N(\bb{X}_0)\to \R$ bounded from below, such that $\forall N \in \bb{N}, \forall \x \in \bb{X}_0$
 \begin{multline}\label{eq:GsDI}
\Lambda(f(x^*_k,u^*_k)) -\Lambda(x^*_k) \leq \\ -  \alpha_{\ell}\left(\left\|
\begin{matrix}x_k^*- x_s\\ u_k^* - u_s \end{matrix}
\right\|\right) +s(x^*_k,u^*_k) \tag{sDI}
 \end{multline}
holds for some $\alpha_{\ell}\in \mc{K}$ and each point $(x^*_k,u^*_k)$ along game pairs $(x^*,u^*)\in \mc{S}^{\text{\tiny GNE}}_N(\x)$. \eDef
 \end{dfn}

The formal definition of the turnpike property of the GNEP in~\eqref{eq:MPCPerAgent} with respect to the steady-state GNE~\eqref{eq:SteadyStateGNEP} is given next. 
\begin{dfn}[Measure turnpike in GNEPs \cite{hall2025system}]\label{dfn:GameTurnpike}~\\
%
The GNEP~\eqref{eq:MPCPerAgent} exhibits the (measure) turnpike property at $(x_s,u_s)$ if for each $\varepsilon>0$ there exists $C>0$ such that $\forall N\in \bb{N}, \forall \x\in  \bb{X}_0$, and for all game pairs $(x^*,u^*)\in \mc{S}^{\text{\tiny GNE}}_N(\x)$ it holds that 
\begin{align}\label{eq:TurnpikeInequality}
Q_{\varepsilon} := \#\left\{k \in \bb{Z}_N \left|\, \left\| \begin{smallmatrix}x^*_k- x_s\\ u^*_k - u_s \end{smallmatrix} \right\|\leq \varepsilon\right\} \geq N - \frac{C}{\alpha(\varepsilon)}\right.
\end{align}
for some $\alpha \in \mc K$ and where $\#$ refers to the cardinality.\eDef
\end{dfn}
The turnpike property intuitively quantifies the number of steps a GNE trajectory spends within $\mc{B}_\varepsilon(x_s)$, an $\varepsilon$-ball around the steady-state GNE.

Strict dissipativity of the GNEP with respect to $(x_s, u_s)$ implies the turnpike property for the input and state trajectories resulting from~\eqref{eq:MPCPerAgent}. The result is inspired by~\cite[Thm. 5.3]{gruene2013economic}. Its proof can be found in \cite{hall2025system}. 

\begin{thm}[Strict dissipativity $\Rightarrow$ turnpike \cite{hall2025system}]\label{thm:DissToTurnpike}~\\
Consider the GNEP~\eqref{eq:MPCPerAgent} and let Assumptions~\ref{ass:cheap} and \ref{ass:PoA} hold. Suppose that the GNEP~\eqref{eq:MPCPerAgent} is strictly dissipative with respect to $(x_s,u_s)$ in the sense of Definition~\ref{dfn:StrictDiss} and the storage is bounded on $\bb X_N(\bb X_0)$. Then all GNEP solutions exhibit the (measure) turnpike property at $(x_s,u_s)$. \eDef
\end{thm}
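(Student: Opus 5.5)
We follow the classical argument of \cite[Thm.~5.3]{gruene2013economic}: sum the dissipation inequality along a game pair and bound the resulting telescoping sum from above using cheap reachability and the bounded price of anarchy. Fix $N\in\bb N$, $\x\in\bb X_0$ and an arbitrary game pair $(x^*,u^*)\in\mc S^{\text{\tiny GNE}}_N(\x)$. Summing \eqref{eq:GsDI} over $k\in\bb Z_N$ gives
\begin{equation*}
\sum_{k=0}^{N-1}\alpha_\ell\left(\left\|\begin{smallmatrix}x_k^*-x_s\\ u_k^*-u_s\end{smallmatrix}\right\|\right) \leq J_N(x^*,u^*) - N\ell(x_s,u_s) + \Lambda(x_0^*) - \Lambda(x_N^*).
\end{equation*}
All states $x^*_k$, including $x^*_N$, lie in $\bb X_N(\bb X_0)$. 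Since $\Lambda$ is bounded there, say $|\Lambda|\leq \bar\Lambda$, the storage difference is at most $2\bar\Lambda$, uniformly in $N$ and $\x$.

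The main step is to bound $J_N(x^*,u^*)-N\ell(x_s,u_s)$ by a constant independent of $N$. We do this in two moves. First, Lemma~\ref{lem:linearBound}(i), which follows from Assumption~\ref{ass:PoA}, gives $J_N(x^*,u^*)\leq V^\diamond_N(\x)+\bar P$ for every game pair. Second, the pair $(x,u)\in\mc Z_\infty(\x)$ from Assumption~\ref{ass:cheap}, truncated to the first $N$ steps, is feasible for the OCP~\eqref{eq:OCP}. This holds because the constraints are stage-wise and contain no terminal constraint. Hence $V^\diamond_N(\x)\leq J_N(x,u)\leq N\ell(x_s,u_s)+\delta$. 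Combining the two,
\begin{equation*}
\sum_{k=0}^{N-1}\alpha_\ell\left(\left\|\begin{smallmatrix}x_k^*-x_s\\ u_k^*-u_s\end{smallmatrix}\right\|\right)\leq \delta+\bar P+2\bar\Lambda =: C_0 .
\end{equation*}
This step is the crux: the price-of-anarchy bound must be in the additive form of Lemma~\ref{lem:linearBound}, not the multiplicative ratio. A multiplicative bound $P\,V^\diamond_N$ would leave a remainder $(P-1)N\ell(x_s,u_s)$ that grows with $N$, unless we use the uniform upper bound $V$ on $V^\diamond_N$ from Assumption~\ref{ass:PoA}. That upper bound is exactly what makes $\bar P$ horizon-independent.

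The counting argument is then standard. Fix $\varepsilon>0$. Every index $k\notin Q_\varepsilon$ satisfies $\|(x^*_k-x_s,u^*_k-u_s)\|>\varepsilon$, so it contributes at least $\alpha_\ell(\varepsilon)$ to the sum, since $\alpha_\ell\in\mc K$ is increasing. Therefore $(N-Q_\varepsilon)\,\alpha_\ell(\varepsilon)\leq C_0$, that is, $Q_\varepsilon\geq N-C_0/\alpha_\ell(\varepsilon)$. This is \eqref{eq:TurnpikeInequality} with $\alpha=\alpha_\ell$ and $C=\max\{C_0,1\}>0$. The constant $C$ does not depend on $\varepsilon$, $N$, $\x$ or the chosen game pair, which is stronger than Definition~\ref{dfn:GameTurnpike} requires. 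Since the game pair was arbitrary, all GNEP solutions exhibit the measure turnpike property at $(x_s,u_s)$.
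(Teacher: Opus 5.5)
Your proof is correct and follows essentially the argument the paper defers to \cite{hall2025system}, which is modeled on \cite[Thm.~5.3]{gruene2013economic}: you telescope \eqref{eq:GsDI} along an arbitrary game pair, bound $J_N(x^*,u^*)-N\ell(x_s,u_s)$ by $\delta+\bar P$ via Lemma~\ref{lem:linearBound}(i) and the truncated cheap-reachability pair, absorb the storage difference, and count the indices outside the $\varepsilon$-ball. One small refinement: as Assumption~\ref{ass:cheap} is worded, $\delta$ may depend on $\x$, but your claim that $C$ is independent of $\x$ still holds, because $\nu\le V_N^\diamond(\x)\le N\ell(x_s,u_s)+\delta$ for all $N$ forces $\ell(x_s,u_s)\ge 0$, and hence $J_N(x^*,u^*)-N\ell(x_s,u_s)\le V+\bar P$ uniformly.
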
%

\section{Closed-loop Analysis of RHGs}

We make the following set of assumptions to derive our closed-loop stability results of the nonlinear RHG scheme.

\begin{ass}[Continuity and compactness]\label{ass:Continuity}
The constraint set $\bb{Z}_N$ is compact, and $\forall v \in \mc{V}$ the functions $\ell^v$,  $f$, and $\Lambda$ are continuous and $\ell$ is Lipschitz continuous with constant $L_\ell$.
\end{ass}

\begin{ass}[Local controllability at $(x_s, u_s)$]\label{ass:LocalControllability}
The Jacobian linearization of \eqref{eq:Dynamics} 
\[
A:= \dfrac{\partial f}{\partial x}\big|_{\left[\begin{smallmatrix}
    x \\u
\end{smallmatrix}\right] = \left[\begin{smallmatrix}
   x_s \\u_s
\end{smallmatrix}\right]}, \qquad 
B:= \dfrac{\partial f}{\partial u}\big|_{\left[\begin{smallmatrix}
    x \\u
\end{smallmatrix}\right] = \left[\begin{smallmatrix}
   x_s \\u_s
\end{smallmatrix}\right]}
\]
is controllable and $g(x_s, u_s^v, u_s^{-v})< 0$ , $h^v(u_s^v)< 0, \forall v \in \mc{V}$, i.e., $(x_s, u_s) \in \inte\bb Z$ . \eDef
\end{ass}

\begin{ass}[Finite time controllability into $\mathcal{B}_\varepsilon(x_s)$]\label{ass:FiniteControllability}
For any $\varepsilon > 0$  there exists $\hat N \in \bb{N}$ such that for each $\x \in \bb{X}_0$ there is $k \leq \hat N$ and a feasible pair $(x,u) \in \mc{Z}_N(\x)$ with
\[
x_k(\x) \in \mathcal{B}_\varepsilon(x_s).\vspace{-8mm}
\]
\vspace{2mm}\eDef
\end{ass}
\begin{rmk}[The turnpike property and controllability]

Observe that Theorem~\ref{thm:DissToTurnpike} implies that for each $\x \in \bb{X}_0$ and $\varepsilon > 0$, there exist $N(\varepsilon) \in \bb{N}$ and $k \in \bb{Z}_{N+1}$ such that $x^*_k(\x) \in \mathcal{B}_\varepsilon(x_s)$.
Thus, the result of Theorem~\ref{thm:DissToTurnpike} implies finite time controllability into $\mathcal{B}_\varepsilon(x_s)$. That is Assumption~\ref{ass:cheap} combined with strict dissipativity implies Assumption~\ref{ass:FiniteControllability}.
Conversely, combining Assumptions~\ref{ass:LocalControllability} and~\ref{ass:FiniteControllability} gives Assumption~\ref{ass:cheap}, where $\delta>0$ bounds the cost it takes to steer the initial condition to $x_s$.\eDef
\end{rmk}


The following technical results prepare the statement of our main closed-loop results. 

\begin{lmm}[Recursive feasibility of GNEPs]\label{lem:recursiveFeasibility}
    Suppose that the GNEP~\eqref{eq:MPCPerAgent} is strictly dissipative with respect to $(x_s,u_s)$ in the sense of Definition~\ref{dfn:StrictDiss} and for all states occurring along closed-loop RHG trajectories the storage $\Lambda(x)$ is bounded. Moreover, let the GNEP satisfy Assumptions~\ref{ass:PoA}--\ref{ass:FiniteControllability}.
    Then, there exists a finite horizon $\hat N$, such that for all $N\geq \hat N$, the GNEP is recursively feasible, i.e., if \eqref{eq:MPCPerAgent} is feasible for $\x_t$ then it is also feasible for $\x_{t+1}$.\eDef
\end{lmm}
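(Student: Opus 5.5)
The plan is to build, from the GNE trajectory computed at time $t$, an explicit feasible candidate for the GNEP at $\x_{t+1}=x^*_{1|t}$. Since the GNEP feasible set $\mc Z_N(\x_{t+1})$ is defined by the joint constraints \eqref{eq:Constr1}--\eqref{eq:Constr4} only, a feasible pair suffices. Existence of a GNE is not what is meant by ``feasible'' here, and it is taken from the setting under Assumption~\ref{ass:Continuity}. The difficulty is that the shifted tail $(x^*_{1|t},\dots,x^*_{N|t})$ only has length $N-1$, and its endpoint $x^*_{N|t}$ may lie on a ``leaving arc'' from which no continuation is guaranteed. Hence, instead of appending a step at the end, I will splice in the turnpike: leave the tail at a time where it is close to $x_s$, and stay near $x_s$ from then on.

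\textbf{Step 1 (turnpike on the predicted trajectory).} Strict dissipativity, Assumptions~\ref{ass:cheap} and~\ref{ass:PoA}, and boundedness of $\Lambda$ along closed-loop states let me invoke Theorem~\ref{thm:DissToTurnpike}. Assumption~\ref{ass:cheap} holds on the relevant set by the Remark, using Assumptions~\ref{ass:LocalControllability}--\ref{ass:FiniteControllability}. For any $\varepsilon>0$ this gives $Q_\varepsilon\ge N-C/\alpha(\varepsilon)$. Choosing $\hat N> C/\alpha(\varepsilon)+1$ guarantees that there is an index $k'\in\{1,\dots,N-1\}$ with $\|x^*_{k'|t}-x_s\|\le\varepsilon$. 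Indeed, at most $\lfloor C/\alpha(\varepsilon)\rfloor$ indices are outside the ball, so some index $k'\ge1$ lies inside it.

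\textbf{Step 2 (local viability near $x_s$).} From Assumption~\ref{ass:LocalControllability}, controllability of $(A,B)$ together with $(x_s,u_s)\in\inte\bb Z$ yields, via a standard implicit-function / linearization argument, an $\varepsilon_0>0$, a finite $n\le n_x$, and the following property. For every $\xi\in\mc B_{\varepsilon_0}(x_s)$ there is an input sequence of length $n$ steering $\xi$ exactly to $x_s$. Along this sequence the state--input pairs remain in a neighborhood of $(x_s,u_s)$ contained in $\inte\bb Z$, so that $g<0$ and $h^v<0$ hold. Continuity of $f$, $g$, $h^v$ (Assumption~\ref{ass:Continuity}) is used here, with continuous differentiability of $f$ at $(x_s,u_s)$ implicit in Assumption~\ref{ass:LocalControllability}. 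Once at $x_s$, applying $u_s$ keeps the state there feasibly for all future steps.

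\textbf{Step 3 (candidate construction).} With $\varepsilon=\varepsilon_0$ fixed in Step 1, I build the candidate on horizon $N$ starting at $\x_{t+1}$ in three pieces. First, follow $u^*_{1|t},\dots,u^*_{k'-1|t}$, which is feasible because it is part of a feasible trajectory, and arrive at $x^*_{k'|t}\in\mc B_{\varepsilon_0}(x_s)$. Second, apply the local steering sequence from Step 2. Third, apply $u_s$ for all remaining steps, truncated at total length $N$. Every piece satisfies \eqref{eq:Constr1}--\eqref{eq:Constr3}, so the candidate lies in $\mc Z_N(\x_{t+1})$, proving feasibility for all $N\ge\hat N$.

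\textbf{Main obstacle.} The hardest step is Step 2, deriving a uniform constraint-respecting local controller from mere controllability of the linearization. I would first try an inverse-function argument on the $n$-step map $(\xi,u_0,\dots,u_{n-1})\mapsto x_n$, whose input-Jacobian at $(x_s,u_s,\dots)$ has full rank because $(A,B)$ is controllable. This gives inputs depending continuously on $\xi$ and equal to $u_s$ at $\xi=x_s$, and hence interior feasibility for $\xi$ close enough to $x_s$. A secondary subtlety is uniformity of the constant $C$ in Theorem~\ref{thm:DissToTurnpike} over the initial conditions actually visited. I would handle it by assuming, as the lemma's hypotheses suggest, that the closed-loop states stay in $\bb X_0$ (or a set where the storage bound, and hence $C$, is uniform).
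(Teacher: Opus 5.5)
Your argument is essentially the paper's. You use Theorem~\ref{thm:DissToTurnpike} at $\x_t$ to locate a predicted state $\varepsilon$-close to $x_s$, then splice in an $n$-step steering sequence obtained from Assumption~\ref{ass:LocalControllability}. The only real difference is the splice itself:
\begin{itemize}
\item The paper's candidate \eqref{eq:feasibleInput} bridges from $x^*_{\tau+1|t}$ \emph{back onto} $x^*_{\tau+n|t}$ and then follows the old GNE tail. It therefore coincides with the time-$t$ prediction except on $n$ steps, which is what the cost comparison \eqref{eq:Bnd1} in Lemma~\ref{lem:W} later exploits. The price is that it needs two prediction points ($\tau+1$ and $\tau+n$) inside the ball, which the paper supposes ``for simplicity''.
\item Your candidate steers to $x_s$ and parks there with $u_s$. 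For feasibility alone this is cleaner. It needs only one index $k'\ge 1$ in the ball, and your count with $\hat N>C/\alpha(\varepsilon)+1$ is correct. Your inverse-function argument also makes the paper's informal local-controllability step explicit. It does require $f$ to be $C^1$ near $(x_s,u_s)$, which both you and the paper take for granted.
\end{itemize}
Your ``go to $x_s$ and stay'' construction is in fact what the paper uses in its last paragraph to check cheap reachability at $\x_{t+1}$.

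The one place you fall short of the paper is your ``secondary subtlety''. You should not add the hypothesis that closed-loop states remain in $\bb X_0$, because it follows from your own Step~3; this is exactly how the paper closes the induction. Continue your candidate with $u_s$ forever, so it lies in $\mc Z_\infty(\x_{t+1})$. Now choose $k'$ minimal. At most $\lfloor C/\alpha(\varepsilon_0)\rfloor$ indices lie outside the ball, so $k'\le \lfloor C/\alpha(\varepsilon_0)\rfloor+1$. Since $\ell$ is bounded on the compact $\bb Z$ (Assumption~\ref{ass:Continuity}), you get
\[
J_N\le N\ell(x_s,u_s)+\delta' \quad \text{for all } N,
\]
with
\[
\delta' := \big(\lfloor C/\alpha(\varepsilon_0)\rfloor+n\big)\big(\max_{\bb Z}\ell-\ell(x_s,u_s)\big),
\]
which is independent of $t$. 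By maximality of $\bb X_0$ in Assumption~\ref{ass:cheap}, $\x_{t+1}\in\bb X_0$. Theorem~\ref{thm:DissToTurnpike} then applies again at $t+1$ with the same $C$, since the constant in Definition~\ref{dfn:GameTurnpike} is uniform over $\bb X_0$. Hence one $\hat N$ serves for all $t$.

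The paper only asserts that cheap reachability holds at $\x_{t+1}$ ``with the same bounds''; your bound on $k'$ makes the uniformity in $t$ explicit. With this replacement, your proof covers the lemma as stated, at the same level of rigor as the paper.
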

\begin{proof}
  As a preparatory step, notice that Assumptions~\ref{ass:PoA}--\ref{ass:FiniteControllability} ensure that Theorem \ref{thm:DissToTurnpike} holds, i.e., the open-loop GNEP predictions computed at $\x_t \in \bb X_0$ exhibit the measure turnpike property as per Definition \ref{dfn:GameTurnpike}. Observe that due to  the turnpike property, for any $\varepsilon>0$,  there exists a finite GNEP horizon $N(\varepsilon)$ such that there exists $\tau \in \{0, \dots, N(\varepsilon)\}$ with $x^*_{\tau|t}(\x_t) \in \mathcal B_\varepsilon(x_s)$. In other words, the turnpike phenomenon implies that at some time point $\tau$ the trajectory will be $\varepsilon$-close to $x_s$ and we can choose $\varepsilon>0$ arbitrarily small. We remark that despite there being possibly infinitely many GNEs, the assumption of strict dissipativity combined with the deterministic selection $\kappa$ in \eqref{eq:FeedbackLaw} ensures that all GNE pairs exhibit dissipativity with respect to the same $(x_s, u_s)$.

    Suppose that at time step $t$, the GNEP \eqref{eq:MPCPerAgent} is feasible with initial condition $\x_t\in \bb X_0$ and that the computed game pair is $ (x^*_{\cdot|t}(\x_t) , u^*_{\cdot|t}(\x_t))$. 

    We consider the following candidate input sequence for the GNEP solved at time $t+1$.
    \begin{equation}\label{eq:feasibleInput}
    \tilde u_{\cdot|t+1} =\begin{cases}
        \begin{array}{ll}
          u^*_{k|t}, &k \in \{1, \dots, \tau\}\\
          \bar u_k, &k \in \{0, \dots,  n-1\}\\
          u^*_{k|t}, &k \in \{\tau+n, \dots, N-1\} 
        \end{array}
    \end{cases}
    \end{equation}
    which relies on the GNEP input $u^*_{\cdot|t}$ and on the n-step input $ \bar u \in \mc{Z}_n(x^*_{\tau+1|t})$. The latter has to satisfy
    $\bar x_n(\bar u, x^*_{\tau+1|t}) = x^*_{\tau+n|t}$. That is, the trajectory starting at $x^*_{\tau+1|t}$, driven by $\bar u$ reaches  $x^*_{\tau+n|t}$ after $n$ steps, as displayed in Figure~\ref{fig:Turnpike_sketch}. Notice that local controllability 
    (Assumption~\ref{ass:LocalControllability}) ensures that there exists $\varepsilon>0$ such that
    if $x^*_{\tau+1|t} \in \mathcal B_\varepsilon(x_s)$ there exists $ \bar u \in \mc{Z}_n(x^*_{\tau+1|t})$ with $\bar x_n(\bar u, x^*_{\tau+1|t}) = x^*_{\tau+n|t}$. For simplicity, we suppose that $x^*_{\tau+n|t}\in \mathcal B_\varepsilon(x_s)$. In case $x^*_{\tau+n|t}\not\in \mathcal B_\varepsilon(x_s)$, then the turnpike property ensures that (for a sufficiently long but finite horizon $N(\varepsilon)$) there exists some $m>n, m<N$ such that $x^*_{\tau+m|t}\in \mathcal B_\varepsilon(x_s)$.
    
    Assumption~\ref{ass:LocalControllability} further implies that $\bar x_j(\bar u, x^*_{\tau+1|t}) \in  \mc B_{\rho(\varepsilon)}(x_s) \supset \mc B_\varepsilon(x_s)$ for $j=0,\dots,n$. Here, $\rho(\varepsilon)$ is the size of the closed ball in which the finitely long trajectory pair $(\bar x, \bar u)$ evolves.\footnote{Formally, we have, 
    for all $k\in\bb Z_{n+1}$, that $(\bar x_k, \bar u_k) \in \mc B_{\rho(\varepsilon)}(x_s, u_s)$ $\Rightarrow$ $\bar x_k \in \mc B_{\rho(\varepsilon)}(x_s)$.} By the local continuity of the linearization error,\footnote{That is, the error between the nonlinear dynamics and the linearized one varies continuously with $\varepsilon>0$ and goes to $0$ as $\varepsilon\to 0$.} we have that $\rho(\varepsilon) \to 0$ as $\varepsilon \to 0$, i.e., the size of the ball bounding the trajectory pair $(\bar x, \bar u)$ shrinks continuously with $\varepsilon \geq \|x^*_{\tau+1|t} - x_s\|$.
    \begin{figure}
        \centering
    \includegraphics[width=\linewidth]{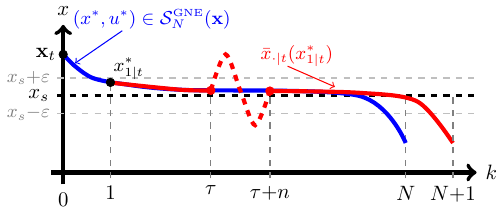}
        \caption{A schematic showing the turnpike property of $(x^*,u^*)$ with respect to $x_s$ and portraying the candidate state sequence~\eqref{eq:feasibleInput} employed to prove Lemma~\ref{lem:recursiveFeasibility}.}
        \label{fig:Turnpike_sketch}
    \end{figure}

    In general, the next closed-loop state $\x_{t+1}$ resulting from the RHG feedback law \eqref{eq:FeedbackLaw} does not need to be in $\bb X_0$. However, from $\x_{t+1} = x^*_{1|t}(\x_t)$ we may apply the first $\tau$ steps of  $\tilde u_{k|t+1}$ to steer the state $\varepsilon$-close to the turnpike state $x_s$, with $\varepsilon>0$ arbitrarily small. Due to Assumption \ref{ass:LocalControllability}, we have that there exists some $\bar\varepsilon(\varepsilon) \geq \varepsilon$ such that (i) for all $\x \in\mathcal B_{\bar\varepsilon(\varepsilon)}(x_s)$ we have $\mc{Z}_n(\x) \not =\emptyset$ and (ii) from all $\x \in\mathcal B_{\varepsilon}(x_s)\subset \mathcal B_{\bar\varepsilon(\varepsilon)}(x_s)$ the turnpike state $x_s$ can be reached in $n$ steps without violating any input or state constraint and without leaving $\mathcal B_{\bar\varepsilon(\varepsilon)}(x_s)$. 
    This means that for sufficiently long horizons $N \geq N(\varepsilon)$, cheap reachability (Assumption \ref{ass:cheap}) holds for the next initial condition $\x_{t+1}$ with the same bounds. Thus at $\x_{t+1}$ the conditions of Theorem~\ref{thm:DissToTurnpike} hold again for the GNEP solutions computed at $\x_{t+1}$. Thus also at $t+1$ the turnpike property holds. 
\end{proof}

Next, similar to the sampled-data OCP results \cite{faulwasser15approx}, we propose the candidate Lyapunov function 
\begin{equation}\label{eq:LyapCand}
  W(x) := V_N^*(x) + \Lambda(x) 
\end{equation}
and analyze its properties. 

\begin{lmm}[Properties of W(x)]\label{lem:W}
    Suppose that the GNEP~\eqref{eq:MPCPerAgent} is strictly dissipative with respect to $(x_s,u_s)$ in the sense of Definition~\ref{dfn:StrictDiss} and for all states occurring along closed-loop RHG trajectories  the storage is bounded. Moreover, let the GNEP satisfy Assumptions~\ref{ass:PoA}--\ref{ass:FiniteControllability}.
    Then,  $W$ from \eqref{eq:LyapCand} satisfies the following:
    \begin{itemize}
        \item[(i)] It holds that
         $ \alpha_\ell(\|\x-x_s\|) \leq W(\x)$.
        \item[(ii)] There exists $\tilde \rho >0$ such that
    \begin{equation*}
      W(\x_{t+1})- W(\x_t) < 0
    \end{equation*}
    holds for all $\x_t \not\in  \mathcal B_{\tilde \rho}(x_s)$. \eDef
    \end{itemize}
\end{lmm}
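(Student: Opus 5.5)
Both items rest on the dissipation inequality \eqref{eq:GsDI} evaluated along the GNE prediction computed at $\x$. To make $W$ comparable to the dissipation terms, I would work with the shifted stage cost $\ell-\ell(x_s,u_s)$ and a storage normalized so that $\Lambda(x_s)=0$. This is possible without loss of generality, because adding constants to $\ell$ and $\Lambda$ does not change the GNE set (each agent's cost only changes by a constant) and leaves \eqref{eq:GsDI} intact.

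\emph{Item (i).} I would sum \eqref{eq:GsDI} over $k\in\bb Z_N$ along $(x^*_{\cdot|t},u^*_{\cdot|t})\in\mc S^{\text{\tiny GNE}}_N(\x)$. This gives
\[
\Lambda(x^*_N)-\Lambda(\x)\le -\sum_{k=0}^{N-1}\alpha_\ell\left(\left\|\begin{smallmatrix}x^*_k-x_s\\u^*_k-u_s\end{smallmatrix}\right\|\right)+V_N^*(\x)-N\ell(x_s,u_s).
\]
Keeping only the $k=0$ term, where $x_0^*=\x$, and using $\|(\x-x_s,u_0^*-u_s)\|\ge\|\x-x_s\|$, yields
\[
\alpha_\ell(\|\x-x_s\|)\le V_N^*(\x)+\Lambda(\x)-\Lambda(x^*_N)-N\ell(x_s,u_s).
\]
With the normalization above, this is $W(\x)$ up to the term $-\Lambda(x^*_N)$. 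Here I would use the turnpike argument from the proof of Lemma~\ref{lem:recursiveFeasibility} together with continuity of $\Lambda$ (Assumption~\ref{ass:Continuity}) to control the endpoint contribution. If that is too weak, the fallback is to state (i) with $\Lambda$ bounded below, which absorbs $-\Lambda(x^*_N)$ into a constant.

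\emph{Item (ii).} I would write
\[
W(\x_{t+1})-W(\x_t)=\big[V_N^*(\x_{t+1})-V_N^*(\x_t)\big]+\big[\Lambda(\x_{t+1})-\Lambda(\x_t)\big].
\]
Since $\x_{t+1}=x^*_{1|t}$, the storage increment is exactly the $k=0$ instance of \eqref{eq:GsDI}:
\[
\Lambda(\x_{t+1})-\Lambda(\x_t)\le-\alpha_\ell(\|\x_t-x_s\|)+\ell(\x_t,u^*_{0|t})-\ell(x_s,u_s).
\]
For the value-function increment I would use the candidate $\tilde u_{\cdot|t+1}$ from \eqref{eq:feasibleInput}. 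It follows the old GNE prediction up to the turnpike time $\tau$, splices in the $n$-step local-controllability segment inside $\mc B_{\rho(\varepsilon)}(x_s)$, and then continues along the old tail. Its total cost equals $V_N^*(\x_t)-\ell(\x_t,u^*_{0|t})$ plus a splice error. By Lipschitz continuity of $\ell$ this error is at most $2nL_\ell\rho(\varepsilon)$, plus one appended final stage near the turnpike costing $\ell(x_s,u_s)+L_\ell\varepsilon$. The GNE at $\x_{t+1}$ need not be cost-minimizing for the population, so I cannot compare $V_N^*(\x_{t+1})$ to this candidate directly. Instead I would go through Lemma~\ref{lem:linearBound}: since $V^\diamond_N(\x_{t+1})$ is at most the candidate cost, $V_N^*(\x_{t+1})\le \text{candidate cost}+\bar P$.

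Adding the two increments, the terms $\ell(\x_t,u^*_{0|t})$ cancel and
\[
W(\x_{t+1})-W(\x_t)\le-\alpha_\ell(\|\x_t-x_s\|)+c(\varepsilon),\qquad c(\varepsilon):=\bar P+2nL_\ell\rho(\varepsilon)+L_\ell\varepsilon .
\]
Choosing $\tilde\rho>\alpha_\ell^{-1}(c(\varepsilon))$, which requires $\alpha_\ell\in\mc K_\infty$ or $c(\varepsilon)$ small enough to lie in the range of $\alpha_\ell$, gives the strict decrease for all $\x_t\notin\mc B_{\tilde\rho}(x_s)$.

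\textbf{Main obstacle.} The hardest step is this GNE-versus-optimum gap. The constant $\bar P$ does not shrink with $N$ or $\varepsilon$, so it inflates $\tilde\rho$, and this is exactly why the result is only practical. I would try first to use the $\bar P$ bound as stated. If it is too coarse, I would refine it by applying the turnpike estimate \eqref{eq:TurnpikeInequality} to both the GNE at $\x_{t+1}$ and the shifted old GNE. Both trajectories spend all but $C/\alpha(\varepsilon)$ steps in $\mc B_\varepsilon$, so their costs should differ by a quantity independent of $N$.
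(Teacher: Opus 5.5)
\textbf{Gap in part (i).} Your plan for part (i) does not close. With the normalization $\Lambda(x_s)=0$ you are left with $\alpha_\ell(\|\x-x_s\|)\le W(\x)-\Lambda(x^*_{N})$, and the turnpike property gives no control of that endpoint term. The estimate \eqref{eq:TurnpikeInequality} only counts how many steps lie in $\mc B_\varepsilon(x_s)$. Without terminal ingredients the predictions generically have a leaving arc, so $x^*_N$ is not close to $x_s$, and continuity of $\Lambda$ does not help. The normalization $\Lambda(x_s)=0$ also does not make $\Lambda$ sign-definite; a linear storage $\lambda^\top(x-x_s)$ is the typical example. Hence $-\Lambda(x^*_N)$ can be a positive term of order one, and your fallback only yields $\alpha_\ell(\|\x-x_s\|)\le W(\x)+c$, which is weaker than (i).

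The missing step, which is exactly what the paper does, is to use the additive freedom in $\Lambda$ differently. Since $\Lambda$ is bounded below on the relevant states, shift it by a constant so that $\Lambda\ge 0$ there, rather than $\Lambda(x_s)=0$. This leaves \eqref{eq:GsDI} intact. The term $\Lambda(x^*_N)\ge 0$ can then simply be dropped from the telescoped sum, and (i) holds as stated with this normalization built into $W$. Nothing in (ii) uses $\Lambda(x_s)=0$, since only differences of $\Lambda$ enter there, so both parts are consistent under this choice.

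\textbf{Part (ii).} Your argument matches the paper's. You bound $V_N^*(\x_{t+1})\le V_N^\diamond(\x_{t+1})+\bar P$ via Lemma~\ref{lem:linearBound}, and bound $V_N^\diamond(\x_{t+1})$ by the cost of the spliced candidate \eqref{eq:feasibleInput}. The storage increment comes from the $k=0$ instance of \eqref{eq:GsDI}, the terms $\ell(\x_t,u^*_{0|t})$ cancel, and you finish by inverting $\alpha_\ell$.

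One bookkeeping correction: \eqref{eq:feasibleInput} has exactly $N$ stages and appends nothing at the end. The extra stage you need (the $+\ell(x_s,u_s)$ that cancels the $-\ell(x_s,u_s)$ from \eqref{eq:GsDI}) arises because the bridge has $n$ stages, while the stretch of the old prediction it replaces, from $x^*_{\tau+1|t}$ to $x^*_{\tau+n|t}$, has only $n-1$. That extra bridge stage lies in $\mc B_{\rho(\varepsilon)}(x_s)$ and costs at most $\ell(x_s,u_s)+L_\ell\rho(\varepsilon)$. A stage genuinely appended after $x^*_{N|t}$ would sit on the leaving arc and could not be bounded this way.
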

The decrease condition does not hold inside the ball $\mathcal{B}_{\tilde\rho}(x_s)$ since the dissipation decrease $\alpha_\ell(\|x-x_s\|)$ is dominated by the error induced by the turnpike tail. This is well-known for finite-horizon problems without terminal ingredients.

\begin{proof}
\emph{Part (i):} We first consider the following offset on the per-agent stage cost
\begin{equation}\label{eq:barell}
    \bar\ell^v(x,u^v, u^{-v}):= \ell^v(x,u^v, u^{-v}) - \ell^v(x_s, u^v_s, u^{-v}_s),  
\end{equation}
which ensures that 
$
\bar\ell(x,u):= \sum_{v\in \mathcal V} \bar\ell^v(x,u^v, u^{-v})$
satisfies $\bar\ell(x_s,u_s)=0$. Observe, that the GNE solution sets of GNEP \eqref{eq:MPCPerAgent} with $\ell^v$ and $\bar\ell^v$ remain unchanged. Henceforth we assume  that such an offset has been applied and thus we set without loss of generality $\ell(x_s, u_s) = 0$.

Moreover, observe that if $\Lambda$ satisfies the strict dissipation inequality \eqref{eq:GsDI}, then $\Lambda + c$, $c\in \R$ does so as well. Hence boundedness of the storage function $\Lambda$ and compactness of the state constraints (Assumption~\ref{ass:Continuity}) imply that we can suppose without loss of generality that $\Lambda(x) \geq 0$ for all states occurring along closed-loop RHG trajectories.

Consider $W$ from \eqref{eq:LyapCand}, the strict dissipation inequality with $\Lambda(x) \geq 0$ and $\ell(x_s, u_s) = 0$ gives
\[
\Lambda(f(x^*_{k|t},u^*_{k|t})) -\Lambda(x^*_{k|t}) \leq  \ell^*_k - \alpha^*_{\ell,k}
\]
where $\ell^*_k$ and $\alpha^*_{\ell,k}$ are shorthands for the $k$ dependent terms on the right hand side of \eqref{eq:GsDI}. Now using a telescopic sum and evaluating \eqref{eq:GsDI} from $k=0$ to $k = N-1$
gives
\begin{multline*}
    \Lambda(f(x^*_{N-1|t},u^*_{N-1|t}))+\sum_{k=0}^{N-1} \alpha^*_{\ell,k} \\\leq \Lambda(\x_t)+\sum_{k=0}^{N-1} \ell^*_k= V_N^*(\x_t)+ \Lambda(\x_t). 
\end{multline*}
The term on the left side can be bounded from below by
\[
\alpha^*_{\ell,0} \leq  \Lambda(f(x^*_{N-1|t},u^*_{N-1|t}))+\sum_{k=0}^{N-1} \alpha^*_{\ell,k}
\]
and thus with $\alpha^*_{\ell,0} = \alpha_{\ell}\left(\left\|
\begin{smallmatrix}\x_t- x_s\\ u^*_{0|t}- u_s \end{smallmatrix}\right\|\right)$ it follows
\[
\alpha_\ell(\|\x_t - x_s\|) \leq \alpha_{\ell}\left(\left\|
\begin{smallmatrix}\x_t- x_s\\ u^*_{0|t}- u_s \end{smallmatrix}\right\|\right)\leq W(\x_t)
\]
which shows Part (i).

\emph{Part (ii):}
Assumption \ref{ass:PoA} (bounded price of anarchy) gives
\begin{multline*}
     \Delta W:= W(\x_{t+1})- W(\x_t) \\= V_N^*(\x_{t+1}) + \Lambda(\x_{t+1}) - V_N^*(\x_t)-  \Lambda(\x_t)\\
    \leq V_N^\diamond(\x_{t+1})+\bar P + \Lambda(\x_{t+1}) - V_N^*(\x_t)-  \Lambda(\x_t).
\end{multline*}

Next, we use the fact that the OCP value function $V_N^\diamond(\x_{t+1})$ can be bounded from above by the performance of the feasible input 
$ \tilde u_{\cdot|t+1}$ defined in~\eqref{eq:feasibleInput}, i.e., we have 
\[
J(\tilde x_{\cdot|t+1}, \tilde u_{\cdot|t+1}) = \sum_{k=0}^{N-1}\ell( \tilde x_{k|t+1},  \tilde u_{k|t+1}) \geq V_N^\diamond(\x_{t+1}),
\]
where $\tilde x_{\cdot|t+1}$ is the trajectory generated by  $\tilde u$ and initial condition $\x_{t+1}$.
Hence
\begin{align}\label{eq:DeltaW}
    \Delta W \leq J_{t+1}  - V_N^*(\x_t) +\bar P + \Lambda(\x_{t+1})-  \Lambda(\x_t),
\end{align}
where we use the shorthand $J_{t+1}:=J( \tilde x_{\cdot|t+1}, \tilde u_{\cdot|t+1})$.
Due to \eqref{eq:feasibleInput} and due to the absence of plant-model mismatch in the RHG, the functional $J_{t+1}$   contains trajectory parts of $ (x^*_{\cdot|t} , u^*_{\cdot|t})$. For the sake of readability, we suppress the dependence of this game pair on the initial condition $\x_t$.  In particular, the identity
\[
\ell(x^*_{k|t} , u^*_{k|t}) = \ell( \tilde x_{j|t+1},  \tilde u_{j|t+1})
\]
holds for all $k=j+1$, $j\in \{0,\dots,  \tau -1\}$ and for $k=j$, $j\in \{\tau+n, \dots,  N-1\}$.
Hence, we arrive at
\begin{multline}\label{eq:Bnd1}
    J_{t+1}- V_N^*(\x_t) = -\ell(x^*_{0|t} , u^*_{0|t}) + \ell(x_s, u_s)\\+ 
    \sum_{j=\tau}^{\tau+n-1}\ell( \tilde x_{j|t+1},  \tilde u_{j|t+1}) - \ell(x_s, u_s) \\-\sum_{k=\tau+1}^{\tau+n}\ell(x^*_{k|t} , u^*_{k|t})- \ell(x_s, u_s),  
\end{multline}
where we added $\sum_{k=0}^{N-1} \ell(x_s, u_s) - \sum_{k=0}^{N-1} \ell(x_s, u_s) = 0$ using the negative sum in $J( \tilde x_{\cdot|t+1},\tilde u_{k|t+1})$ and the positive one in $- V_N^*(\x_t)$.
The second sum in \eqref{eq:Bnd1} admits the bound
\begin{equation*}
\sum_{j=\tau}^{\tau+n-1} \left( \ell( \tilde x_{j|t+1},  \tilde u_{j|t+1}) - \ell(x_s, u_s)\right) \leq nL_\ell \rho(\varepsilon),
\end{equation*}
where $n = \dim x$ and $L_\ell$ is the Lipschitz constant of $\ell$. Moreover, $\rho(\varepsilon) \geq \varepsilon$ is the radius of the ball bounding the $n$-step trajectory pair $(\bar x, \bar u)$ introduced in the proof of Lemma~\ref{lem:recursiveFeasibility}. Likewise, we have
\begin{equation*}  
 -\sum_{k=\tau+1}^{\tau+n} \left(\ell(x^*_{k|t} , u^*_{k|t}) - \ell(x_s, u_s) \right) \leq nL_\ell \Delta(\varepsilon),
\end{equation*}
where $\Delta(\varepsilon) = \max_{k \in \{\tau+1, \dots,\tau+n \}} \|(x^*_{k|t} , u^*_{k|t}) - (x_s, u_s)\|$.
We obtain further
\begin{subequations} \label{eq:Bnd2}
\begin{multline}
     J_{t+1}- V_N^*(\x_t) \leq \\-\ell(x^*_{0|t} , u^*_{0|t}) + \ell(x_s, u_s) + nL_\ell ( \rho(\varepsilon)+\Delta(\varepsilon)).
\end{multline}
Next, we bound the term $ \Lambda(\x_{t+1})-  \Lambda(\x_t)$ using the strict dissipation inequality \eqref{eq:GsDI} and obtain
\begin{multline}
     \Lambda(\x_{t+1})-  \Lambda(\x_t) \leq -  \alpha_{\ell}\left(\left\|
\begin{smallmatrix}x^*_{0|t}- x_s\\ u^*_{0|t} - u_s \end{smallmatrix}
\right\|\right) \\+ \ell(x^*_{0|t} , u^*_{0|t}) - \ell(x_s, u_s).
\end{multline}
\end{subequations}
Using \eqref{eq:Bnd2} in \eqref{eq:DeltaW} we obtain that
\begin{multline} \label{eq:BndDeltaW}
     \Delta W \leq  -  \alpha_{\ell}(\|x^*_{0|t}- x_s \|) + \bar P+ nL_\ell ( \rho(\varepsilon)+\Delta(\varepsilon)).
\end{multline}
Hence whenever 
\[\|x^*_{0|t}- x_s \|> \alpha^{-1}_\ell(\bar P+ nL_\ell ( \rho(\varepsilon)+\Delta(\varepsilon)))\]
we have that $ \Delta W < 0$    and thus also
\[
 W(\x_{t+1})- W(\x_t) < 0, \qquad\forall \x_t \not\in  \mathcal B_{\tilde \rho}(x_s)
\]
with $\tilde\rho := \alpha^{-1}_\ell \left(\bar P+ nL_\ell ( \rho(\varepsilon)+\Delta(\varepsilon))\right)$.
This shows Part~(ii) and finishes the proof.
\end{proof}

We now come to our main result. 
 \begin{figure}
        \centering
    \includegraphics[width=\linewidth]{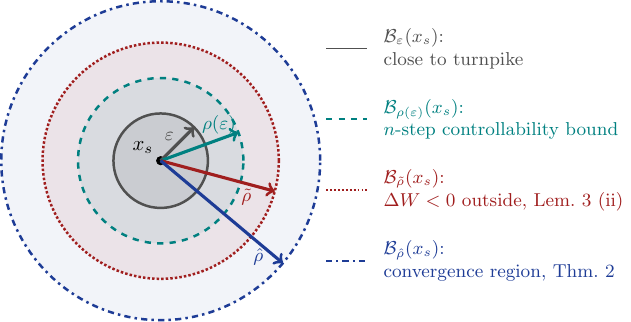}
        \caption{A schematic of the balls $\mc{B}$ around $x_s$ used in Lemma~\ref{lem:recursiveFeasibility}, Lemma~\ref{lem:W}, and Theorem~\ref{thm:practicalConv}.}
        \label{fig:Turnpike_regions}
    \end{figure}
Recall the standard notion of the distance of a point $\x\in\R^{n_x}$ to a compact set $\bb X$ which is
$\displaystyle 
 \dist(\x, \bb X):= \min_{ \tilde x\in \bb X} \|\tilde x - \x\|$.\footnote{In case of open sets one swaps the $\min$ with $\inf$.}

\begin{thm}[Practical asymptotic convergence]\label{thm:practicalConv}
Suppose that the GNEP~\eqref{eq:MPCPerAgent} is strictly dissipative with respect to $(x_s,u_s)$ in the sense of Definition~\ref{dfn:StrictDiss}, and for all states occurring along closed-loop RHG trajectories the storage is bounded. Moreover, let the GNEP satisfy Assumptions~\ref{ass:PoA}--\ref{ass:FiniteControllability}, and let there exist a finite $a >0$ such that 
\[W(x) \leq a, \quad \forall x \in \mathcal B_{\tilde \rho}(x_s)\]
with $\tilde \rho = \alpha^{-1}_\ell \left( \bar P + nL_\ell ( \rho(\varepsilon)+\Delta(\varepsilon))\right)$ holds. Then, there exists $\bar{N}\in \bb{N}$ such that for all $N\geq \bar N$ the closed loop RHG satisfies
\[
\lim_{t\to\infty} \dist(\x_t, \mathcal{B}_{\hat\rho}(x_s)) = 0
\]
for some $\hat\rho \geq \tilde \rho$.
\end{thm}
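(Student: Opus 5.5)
The plan is a standard sublevel-set argument for practical stability, using $W$ from \eqref{eq:LyapCand} as the Lyapunov candidate.

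\textbf{Step 1: the closed loop is well defined.} Fix $\varepsilon>0$ small enough that Assumption~\ref{ass:LocalControllability} yields the $n$-step steering used in \eqref{eq:feasibleInput}. Choose $\bar N$ at least as large as the horizon $\hat N$ of Lemma~\ref{lem:recursiveFeasibility} and the horizon $N(\varepsilon)$ required by the turnpike property (Theorem~\ref{thm:DissToTurnpike}). Then for all $N\geq\bar N$ the closed-loop sequence $\x_t$ exists for all $t$. The same constructions also give Lemma~\ref{lem:W} along the entire closed-loop trajectory. In particular, (i) and (ii) hold with the $\tilde\rho$ of the statement, where $\rho(\varepsilon)$ and $\Delta(\varepsilon)$ do not depend on $t$.

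\textbf{Step 2: an invariant sublevel set.} Define $\hat\rho$ through
\[
\hat\rho:=\max\Bigl\{\tilde\rho,\;\alpha_\ell^{-1}\bigl(\max\{a,\,a+\bar a\}\bigr)\Bigr\},
\]
where $\bar a$ bounds the one-step increase of $W$ when $\x_t\in\mc B_{\tilde\rho}(x_s)$. To obtain $\bar a$, use \eqref{eq:BndDeltaW}: it gives $\Delta W\leq \bar P+nL_\ell(\rho(\varepsilon)+\Delta(\varepsilon))$ everywhere, since $\alpha_\ell\geq 0$. Alternatively, one can use the bound $W\leq a$ directly. Call $c:=a+\bar a$ (or simply $a$ if one takes $\bar a=0$, which needs care) and consider the sublevel set $\Omega_c:=\{\x: W(\x)\leq c\}$. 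By Lemma~\ref{lem:W}(i), $\Omega_c\subseteq\mc B_{\alpha_\ell^{-1}(c)}(x_s)\subseteq\mc B_{\hat\rho}(x_s)$.

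Next, show that $\Omega_c$ is forward invariant. If $\x_t\in\Omega_c\setminus\mc B_{\tilde\rho}(x_s)$, then $W$ strictly decreases by Lemma~\ref{lem:W}(ii), so $\x_{t+1}\in\Omega_c$. If $\x_t\in\mc B_{\tilde\rho}(x_s)$, then $W(\x_t)\leq a$, hence $W(\x_{t+1})\leq a+\bar a=c$.

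\textbf{Step 3: reaching $\Omega_c$ in finite time.} Suppose the trajectory never enters $\mc B_{\tilde\rho}(x_s)$. Then by Lemma~\ref{lem:W}(ii) the sequence $W(\x_t)$ is strictly decreasing. The aim is a \emph{uniform} decrease $\Delta W\leq -\eta<0$ with $\eta>0$, which follows from \eqref{eq:BndDeltaW} if one works with a slightly larger radius $\tilde\rho'>\tilde\rho$ (absorb this into $\hat\rho$). Because $W\geq 0$ and $W$ is bounded on compact $\bb Z_x$ (Assumption~\ref{ass:Continuity}), a uniform decrease contradicts unbounded time outside the ball. Hence some $t_0$ has $\x_{t_0}\in\mc B_{\tilde\rho'}(x_s)$, and with $a$ taken as a bound on this slightly larger ball, $\x_{t_0}\in\Omega_c$. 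Forward invariance from Step 2 then gives $\x_t\in\mc B_{\hat\rho}(x_s)$ for all $t\geq t_0$. So the distance $\dist(\x_t,\mc B_{\hat\rho}(x_s))$ is eventually zero, which in particular gives the limit.

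\textbf{Main obstacle.} The hard step is making the decrease uniform. Lemma~\ref{lem:W}(ii) only gives a strict decrease, and a strictly decreasing sequence bounded below could converge without ever entering the ball. I would first exploit the explicit form $\Delta W\leq-\alpha_\ell(\|\x_t-x_s\|)+\alpha_\ell(\tilde\rho)$ from \eqref{eq:BndDeltaW} and inflate the radius to get a margin $\eta$.

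A fallback avoids the inflation. Summing the decrease over time shows $\sum_t\bigl(\alpha_\ell(\|\x_t-x_s\|)-\alpha_\ell(\tilde\rho)\bigr)^+<\infty$, so $\liminf_t\|\x_t-x_s\|\leq\tilde\rho$. Combined with the invariance argument (applied with $\tilde\rho'$ close to $\tilde\rho$), this gives the claim with $\hat\rho$ slightly larger than $\alpha_\ell^{-1}(c)$.
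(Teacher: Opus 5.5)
Your proposal is correct and uses the same ingredients as the paper's proof: Lemma~\ref{lem:W}(i)--(ii), the bound \eqref{eq:BndDeltaW} read as $\Delta W\le-\alpha_\ell(\|\x_t-x_s\|)+\alpha_\ell(\tilde\rho)$, and the same radius $\hat\rho=\alpha_\ell^{-1}(a+\bar P+nL_\ell(\rho(\varepsilon)+\Delta(\varepsilon)))$. Where you depart from it, you make rigorous two steps that the paper only asserts. First, the paper concludes that the ball $\mathcal B_{\hat\rho}(x_s)$ is forward invariant because $W$ decreases on $\mathcal B_{\hat\rho}(x_s)\setminus\mathcal B_{\tilde\rho}(x_s)$. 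However, nothing prevents a point of that annulus with $W>\alpha_\ell(\hat\rho)$ from leaving the ball despite the decrease. Your sublevel set $\{W\le a+\bar a\}\subseteq\mathcal B_{\hat\rho}(x_s)$ is the object that is actually invariant, and it contains $\mathcal B_{\tilde\rho}(x_s)$ together with its one-step image. Second, the paper's Step~2 infers from strict decrease alone that the state approaches $\mathcal B_{\tilde\rho}(x_s)$, which does not follow. Your margin $\eta=\alpha_\ell(\tilde\rho')-\alpha_\ell(\tilde\rho)$ outside an inflated ball forces entry after finitely many steps. The cost is a slightly larger $\hat\rho$ and a bound on $W$ over $\mathcal B_{\tilde\rho'}(x_s)$. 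This is harmless, because the theorem only claims some $\hat\rho\ge\tilde\rho$, and $W$ is bounded on feasible states ($V_N^*\le N\max_{\bb Z}|\ell|$ by Assumption~\ref{ass:Continuity}, plus the assumed bound on $\Lambda$). The gain is finite-time entry into an invariant set rather than only the limit. Two minor points. Your fallback summation of positive parts is valid only on the event that the trajectory never enters $\mathcal B_{\tilde\rho}(x_s)$, since increments inside the ball may be positive; but that is the only case where you need it. Also, your statement that $\Delta(\varepsilon)$ is independent of $t$ is, as in the paper, an implicit uniformity assumption: as defined, $\Delta(\varepsilon)$ depends on the prediction computed at time $t$.
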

\begin{proof}
    The proof proceeds in three main steps: First we bound the difference between $W$ and $\alpha_\ell$ on $ \mathcal B_{\tilde \rho}(x_s)$. Second we analyze the asymptotics for $\x_t \not \in  \mathcal B_{\tilde \rho}(x_s)$, and third we consider the case $\x_{t} \in  \mathcal B_{\tilde \rho}(x_s)$.
    
    As a preparatory step observe that the setting of the theorem includes the conditions used for Lemmas \ref{lem:recursiveFeasibility} and \ref{lem:W}.

    \emph{Step 1:} Lemma \ref{lem:W}, Part (i)  gives
    \[
    \alpha_\ell(\|x-x_s\|) \leq W(x)\leq a, \quad \forall x \in \mathcal B_{\tilde \rho}(x_s)
    \]
    and hence $W(x) -  \alpha_\ell(\|x-x_s\|) \leq a -  \alpha_\ell(\|x-x_s\|)$.
    Thus \[
    \sup_{x \in \mathcal B_{\tilde \rho}(x_s)}\hspace{-2mm} W(x) -  \alpha_\ell(\|x-x_s\|) \leq \max_{x \in \mathcal B_{\tilde \rho}(x_s)} \hspace{-2mm}a -  \alpha_\ell(\|x-x_s\|)= a.
    \]

    \emph{Step 2:} Whenever $\x_t \not \in  \mathcal B_{\tilde \rho}(x_s)$, we have from Part (ii) of Lemma \ref{lem:W} that $W(\x_{t+1}) - W(\x_t) < 0$ and hence $W$ decays. This means that for $\x_t \not \in  \mathcal B_{\tilde \rho}(x_s)$ the next state $\x_{t+1}$ will be closer to $\mathcal B_{\tilde \rho}(x_s)$. However, it is not directly clear if once the state enters $\mathcal B_{\tilde \rho}(x_s)$ it will remain inside this set. 
    
    \emph{Step 3:} Whenever $\x_t \in  \mathcal B_{\tilde \rho}(x_s)$ Lemma~\ref{lem:W}~(ii) cannot be applied.  However, the bound \eqref{eq:BndDeltaW} still holds and gives 
    \[
    W(\x_{t+1}) - W(\x_t) \leq -  \alpha_{\ell}(\|\x_t- x_s \|) + \bar P+ nL_\ell ( \rho(\varepsilon)+\Delta(\varepsilon))
    \]
    and thus
    \begin{align*}
         W(\x_{t+1}) &\leq W(\x_t)  -  \alpha_{\ell}(\|\x_t- x_s \|) + \bar P+ nL_\ell ( \rho(\varepsilon)+\Delta(\varepsilon))\\
           W(\x_{t+1}) &\geq \alpha_{\ell}(\|\x_{t+1}- x_s \|)
    \end{align*}
   where the second inequality stems from  Part (i) of  Lemma \ref{lem:W}.
   Using Step 1 again, and combining both inequalities gives
   \[
    \alpha_{\ell}(\|\x_{t+1}- x_s \|) \leq  W(\x_{t+1}) \leq a  + \bar P+ nL_\ell (  \rho(\varepsilon)+\Delta(\varepsilon)).
   \]
   Now apply $\alpha_\ell^{-1}$ to obtain 
   \[
   \|\x_{t+1}- x_s \| \leq \alpha_\ell^{-1}\left(a  + \bar P+ nL_\ell ( \rho(\varepsilon)+\Delta(\varepsilon))\right).
   \]
   That is, if $\x_t \in \mathcal B_{\tilde \rho}(x_s)$ then we have that $\x_{t+1} \in  \mathcal B_{\hat\rho}(x_s)$ with $\hat\rho := \alpha_\ell^{-1}\left(a  + \bar P+  nL_\ell (  \rho(\varepsilon)+\Delta(\varepsilon))\right)$.
   The relation $\hat\rho \geq \tilde \rho = \alpha_\ell^{-1}\left(\bar P + nL_\ell ( \rho(\varepsilon)+\Delta(\varepsilon))\right)$ follows from the fact that the inverse $\alpha_\ell^{-1}$ on a compact domain is of class $ \mathcal K$ if $\alpha_\ell \in \mathcal K$.\footnote{
   If a compact domain of $\alpha_\ell$ is considered, the inverse $\alpha_\ell^{-1}$ is of class $\mc K$, see \cite{Kellett14}. Assumption~\ref{ass:Continuity} ensures the compactness of the domain of $\alpha_\ell$. Alternatively, one could define strict dissipativity in~\eqref{eq:GsDI} with $\alpha_\ell \in \mc K_\infty$.} 
   Now, $\forall \, \x_t \in  \mathcal B_{\hat\rho}(x_s) \setminus \mathcal B_{\tilde \rho}(x_s)$, the decay of $W(\x_t)$ from Step 2 holds. Hence the larger set $ \mathcal B_{\hat\rho}(x_s)$  is forward invariant for the closed-loop RHG solution. 
   This finishes the proof.
\end{proof}

The reader might ask how to go from practical asymptotic convergence to practical asymptotic stability and how to quantify the size of the neighborhood $ \mathcal B_{\hat\rho}(x_s)$. The next result comments on the first issue, while the second one is explored via numerical examples in Section~\ref{sec:Examples}.
\begin{prp}
Consider the setting and assumptions of Theorem \ref{thm:practicalConv}. If $\exists \; \alpha_{W} \in  \mathcal K$ such that 
$W(x) \leq \alpha_{ W}(\|x-x_s\|)$ for all $x \in \bb X \supset  \mathcal B_{\hat\rho}(x_s)$,
then \eqref{eq:FeedbackLaw} is locally practically asymptotically stable on the compact set $ \bb X$. \eDef
\end{prp}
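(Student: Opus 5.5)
The plan is to show that $W$ from \eqref{eq:LyapCand} is a \emph{practical} Lyapunov function on $\bb X$, i.e., it has class-$\mc K$ sandwich bounds and decreases outside a small ball. Local practical asymptotic stability then follows from the standard argument, which combines $\mc{KL}$ convergence to a neighborhood with uniform boundedness of trajectories.

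\emph{Step 1: sandwich bounds.} Lemma~\ref{lem:W}~(i), together with the assumed upper bound, gives
\[
\alpha_\ell(\|x-x_s\|)\le W(x)\le \alpha_W(\|x-x_s\|),\qquad x\in\bb X .
\]
Both comparison functions are of class $\mc K$ on the compact set $\bb X$ by Assumption~\ref{ass:Continuity}, so their inverses are well defined there, as in the footnote of Theorem~\ref{thm:practicalConv}.

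\emph{Step 2: dissipation inequality.} The bound \eqref{eq:BndDeltaW} holds for every $\x_t$ at which the GNEP is feasible, and Lemma~\ref{lem:recursiveFeasibility} supplies that feasibility for $N\ge\hat N$. Write $c(\varepsilon):=\bar P+nL_\ell(\rho(\varepsilon)+\Delta(\varepsilon))$. Then
\[
W(\x_{t+1})-W(\x_t)\le -\alpha_\ell(\|\x_t-x_s\|)+c(\varepsilon)\le -\alpha_\ell\bigl(\alpha_W^{-1}(W(\x_t))\bigr)+c(\varepsilon).
\]
This is a practical decrease condition in terms of $W$ alone.

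\emph{Step 3: invariance and stability.} First, the proof of Theorem~\ref{thm:practicalConv} already shows that $\mathcal B_{\hat\rho}(x_s)$ is forward invariant. Next, I will pick a sublevel set $\Omega_r:=\{x\in\bb X: W(x)\le r\}$ with $r$ large enough that $\Omega_r\supseteq\mathcal B_{\hat\rho}(x_s)$ and $\alpha_W(\hat\rho)\le r$, but small enough that $\alpha_\ell^{-1}(r)$ keeps $\Omega_r$ inside $\bb X$. Trajectories starting in $\Omega_r$ cannot leave it: outside $\mathcal B_{\tilde\rho}(x_s)$, $W$ decreases by Lemma~\ref{lem:W}~(ii), and inside it, Step~3 of Theorem~\ref{thm:practicalConv} lands the next state in $\mathcal B_{\hat\rho}(x_s)\subseteq\Omega_r$. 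On $\Omega_r$ I then apply the standard comparison lemma for $w_{t+1}\le w_t-\alpha(w_t)+c$ with $\alpha:=\alpha_\ell\circ\alpha_W^{-1}$, as in \cite{Kellett14} or in practical-stability results for economic MPC \cite{gruene2013economic}. This yields a $\beta\in\mc{KL}$ with
\[
W(\x_t)\le \max\{\beta(W(\x_0),t),\,\alpha^{-1}(2c(\varepsilon))\}.
\]
Converting back through the sandwich gives
\[
\|\x_t-x_s\|\le \alpha_\ell^{-1}\bigl(\beta(\alpha_W(\|\x_0-x_s\|),t)\bigr)+\alpha_\ell^{-1}\bigl(\alpha^{-1}(2c(\varepsilon))\bigr),
\]
which is local practical asymptotic stability with a $\mc{KL}$ bound and an offset ball.

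\textbf{Main obstacle.} The difficult step is the comparison lemma. The decrease in Step~2 holds only up to the constant $c(\varepsilon)$, and inside $\mathcal B_{\tilde\rho}(x_s)$ the function $W$ may increase, so one needs uniform (not merely asymptotic) bounds. I would handle this by splitting into two phases. In the first phase, while $\alpha(W(\x_t))\ge 2c$, we have $W(\x_{t+1})\le W(\x_t)-\tfrac12\alpha(W(\x_t))$, which produces a $\mc{KL}$ decay. In the second phase, once $W(\x_t)$ falls below $\alpha^{-1}(2c)$, the one-step bound keeps $W$ below $\alpha^{-1}(2c)+c$ forever, using the forward invariance from Theorem~\ref{thm:practicalConv}. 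The offset then absorbs this extra $c$.
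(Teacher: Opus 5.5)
The paper omits this proof and only says it is analogous to dissipativity-based NMPC. Your route is exactly that standard argument: sandwich bounds, the practical decrease $W(\x_{t+1})\le W(\x_t)-\alpha_\ell(\alpha_W^{-1}(W(\x_t)))+c(\varepsilon)$, and a two-phase comparison argument giving a $\mathcal{KL}$ estimate plus an offset. There is, however, a step that can fail: the choice of the sublevel set $\Omega_r$. You need $r\ge\alpha_W(\hat\rho)$, so that $\mathcal B_{\hat\rho}(x_s)\subseteq\Omega_r$. You also need $\mathcal B_{\alpha_\ell^{-1}(r)}(x_s)\subseteq\bb X$, so that every state with $W\le r$ stays where the upper bound holds. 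Let $R$ be the largest radius with $\mathcal B_R(x_s)\subseteq\bb X$. Then the two requirements are compatible only if $\alpha_\ell^{-1}(\alpha_W(\hat\rho))\le R$. However, $\alpha_\ell\le W\le\alpha_W$ forces $\alpha_\ell^{-1}\circ\alpha_W\ge\mathrm{id}$, while the hypothesis $\bb X\supset\mathcal B_{\hat\rho}(x_s)$ only gives $R\ge\hat\rho$. So the admissible range for $r$ can be empty; for instance, with $\bb X=\mathcal B_{\hat\rho}(x_s)$ it is empty unless $\alpha_W(\hat\rho)=\alpha_\ell(\hat\rho)$. In that case a state that jumps from $\mathcal B_{\tilde\rho}(x_s)$ to near the boundary of $\mathcal B_{\hat\rho}(x_s)$ can afterwards leave $\bb X$ even though $W$ keeps decreasing. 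A decreasing $W$ only confines the state to $\mathcal B_{\alpha_\ell^{-1}(W(\x_t))}(x_s)$, and once outside $\bb X$ your decrease written in terms of $W$ alone is lost.

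The repair is cheap. $W$ is bounded on the compact state constraint set, say by $\bar W$: continuity of $\ell$ on the compact $\bb Z_N$ (Assumption~\ref{ass:Continuity}) bounds $V_N^*$, and the storage is bounded by hypothesis. Hence $\tilde\alpha_W(s):=\alpha_W(s)+\bar W s/R$ is a class-$\mathcal K$ upper bound for $W$ at every feasible state. With it, $W(\x_{t+1})\le W(\x_t)-\alpha_\ell(\tilde\alpha_W^{-1}(W(\x_t)))+c(\varepsilon)$ holds along the entire closed loop, and no invariant subset of $\bb X$ is needed. Your two-phase comparison then gives the estimate for every $\x_0\in\bb X$, with $\beta(\alpha_W(\|\x_0-x_s\|),t)$ as the transient term, which is also closer to the claimed stability on $\bb X$. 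Two smaller points. First, the forward invariance of $\mathcal B_{\hat\rho}(x_s)$ that you quote from Theorem~\ref{thm:practicalConv} is not actually established there. Decrease of $W$ in the annulus does not keep the state inside a ball; only sublevel sets of $W$, such as $\{W\le a+c(\varepsilon)\}$, are invariant. Your phase-two argument does not need that invariance, because the scalar recursion alone keeps $W\le\alpha^{-1}(2c(\varepsilon))+c(\varepsilon)$, so drop the citation. Second, the displayed $\mathcal{KL}$ bound should carry the offset $\alpha^{-1}(2c(\varepsilon))+c(\varepsilon)$, as you yourself note at the end.
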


The proof follows analogously to that of practical asymptotic stability for dissipativity-based NMPC and is omitted due to space limitations. However, the key open question, left for future work, is verifying the existence of an upper bound on $W$ in the RHG setting.

\section{Numerical studies} \label{sec:Examples}

\subsection*{Linear Quadratic with Coupled Dynamics}
We consider a linear-quadratic instance of the GNEP in~\eqref{eq:MPCPerAgent} with coupled LTI dynamics, coupled costs, and constraints
\begin{equation}\label{eq:GNEP_example}
\left\{
\begin{array}{r l}
\displaystyle \min_{u^v, x}  &\displaystyle \sum_{k= 0}^{N-1}  \displaystyle u_k^v\left(\sum_{j\in \mc{V}} R^{v,j} u_k^{j}\right) + \|x_k- x^{\text{ref}}\|_{Q^v}^2\\
\textrm{s.t.} &  x_{k+1} = A x_k +  \displaystyle \sum_{j\in \mc{V}} B^j u_k^j ,  \hspace{0.1em}  k \in \bb{Z}_{N}\\
& -2\leq u_k^v \leq 2, \hspace{4.6em}
k \in \bb{Z}_{N} \\
& -2\leq \displaystyle \sum_{j\in\mc{V}} u_k^j\leq 2, \hspace{2.9em}
k \in \bb{Z}_{N}
\\ 
&  - 1\leq x_k\leq 1,   x_0 =1, \hspace{1.3em} k \in \bb{Z}_{N+1}.
\end{array}
\right.
\end{equation}
 with $ v \in \{1,2\}$. The parameter values are $A = 1.5$, $B^1 = 1$, $B^2 = 2$, $R^{1,1} = R^{1,2} = 4$, $R^{2,2} = R^{2,1} = 5$ and state weights $Q^1 = 1, Q^2 = 2$. The reference state is $x^{\text{ref}}=0.3$. We solve for GNEs of~\eqref{eq:GNEP_example} using a regularized Fischer--Burmeister method~\cite{liaomcpherson2019regularized}. We show the resulting trajectories in Figure~\ref{fig:ClosedLoopComparison}(a), where the open-loop predictions exhibit the characteristic turnpike leaving arc. In Figure~\ref{fig:ClosedLoopComparison}(b) this arc is fully suppressed using a linear end penalty, the design of which is described in~\cite{hall2025system}. 
 In Figure~\ref{fig:HorizonConvergence}(a) we demonstrate that the convergence region decreases with increasing horizon length and that applying an end penalty yields exact convergence in Figure \ref{fig:HorizonConvergence}(b).

\begin{figure}[t]
    \centering
    \begin{subfigure}[b]{\columnwidth}
        \centering
        \includegraphics[width=\columnwidth]{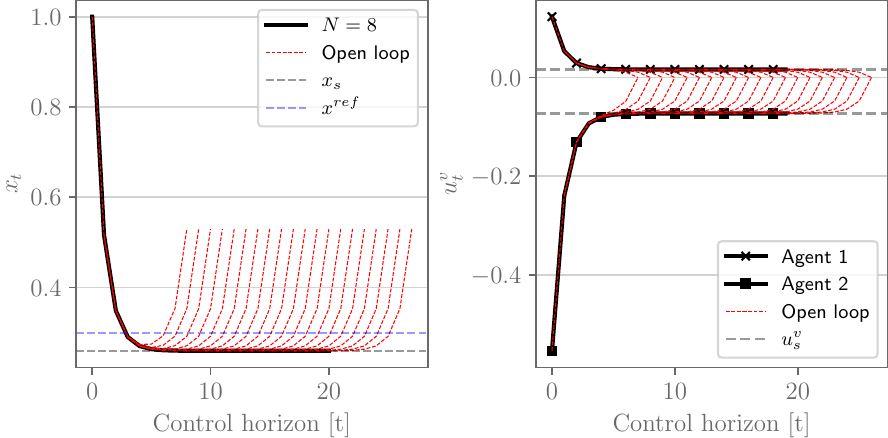}
        \caption{Without terminal penalty.}
        \label{fig:ClosedLoopState}
    \end{subfigure}
    \begin{subfigure}[b]{\columnwidth}
        \centering
        \includegraphics[width=\columnwidth]{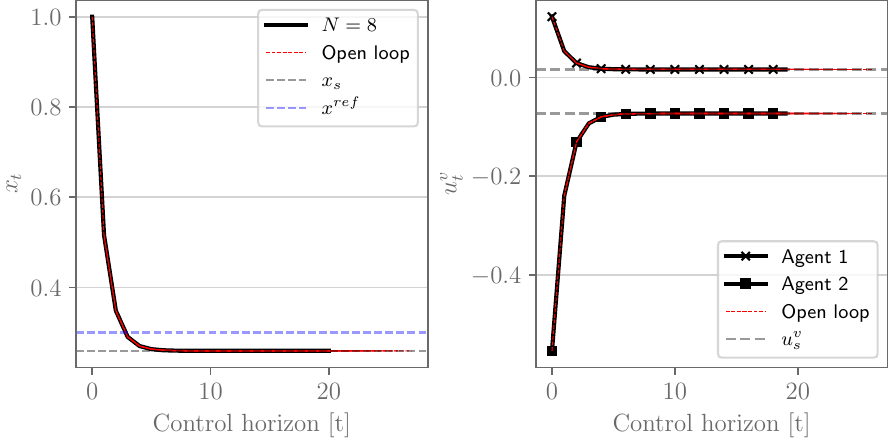}
        \caption{With exact terminal penalty $J^v_N(x,u) + \lambda_s x_N$.}
        \label{fig:ClosedLoopStateExact}
    \end{subfigure}
    \caption{Comparison of closed-loop GNE trajectories of~\eqref{eq:GNEP_example} with open-loop trajectories (red dashed lines) and $N=8$.}
    \label{fig:ClosedLoopComparison}
\end{figure}

\begin{figure}[t]
    \centering
    \begin{subfigure}[b]{0.49\columnwidth}
        \centering
\includegraphics[width=\columnwidth]{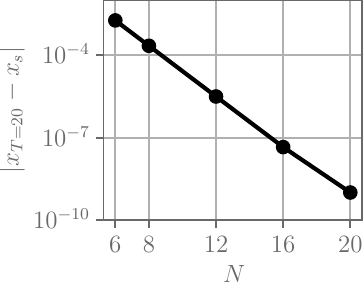}
\caption{No penalty}
 \end{subfigure}%
    \begin{subfigure}[b]{0.49\columnwidth}
        \centering
        \includegraphics[width=\columnwidth]{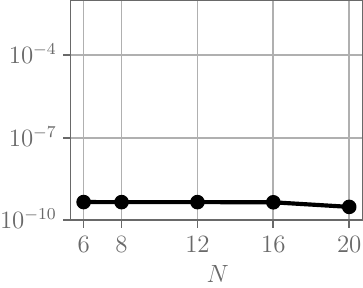}
        \caption{With penalty $\lambda_s\, x_N$} 
            \label{fig:HorizonConvergencePenalty}
    \end{subfigure}
    \caption{Convergence to $x_s$ of closed-loop GNE trajectories of~\eqref{eq:GNEP_example} after $T=20$ control steps. }
     \label{fig:HorizonConvergence}
\end{figure}




\subsection*{Nonlinear Economic Growth Model}

We adapt the nonlinear example from \cite{gruene2013economic, gruene2014asymptotic} to a game-theoretic setting with two agents and local dynamics as follows $\forall v \in \mathcal{V}=\{1,2\}$: 

\begin{equation}\label{eq:GNEP_economic_growth}
\left\{
\begin{array}{r l}
\displaystyle \min_{u^v, x}  &\displaystyle \sum_{k= 0}^{N-1}  \displaystyle 
-\ln\Big(q^v(x_k^v)^{\alpha^v} -r^v  u_k^v \sum_{j\in \mc{V}} u_k^j\Big) \\
\textrm{s.t.} & \displaystyle  x_{k+1}^v =u_k^v  \hspace{4.6em}  k \in \bb{Z}_{N}\\
& 0.1\leq \displaystyle \sum_{j\in\mc{V}} u_k^j\leq 5, \hspace{1.3em} k \in \bb{Z}_{N} \\ 
& 0\leq \displaystyle x_k^v\leq 10, \hspace{0.3em}
\hspace{2.9em} k \in \bb{Z}_{N+1} \\ 
 & x_0^v =1. 
\end{array}
\right.
\end{equation}
where $q^1 = 5$, $q^2= 4$ are the productivity rates, $r^1 = 1$, $r^2= 1.5$ are the cost coefficients for investment interaction, and $\alpha^1 = 0.3, \alpha^2=  0.2$ represent the capital shares. We solve the nonlinear GNEP using the~\textit{NashOpt} library~\cite{bemporad2025nashopt}. In Figure~\ref{fig:ClosedLoopEconGrowth} the closed-loop RHG trajectory is plotted for a prediction horizon of $N= 12$ and we clearly see the convergence of the closed-loop to the steady-state GNE $x_s$ as well as the turnpike property in the open-loop predictions (red dashed lines). Figure~\ref{fig:HorizonConvergenceEconGrowth} demonstrates the exponential convergence of the closed-loop RHG trajectory to the steady-state GNE as a function of the prediction horizon.

\begin{figure}[t]
\centering
\includegraphics[width=\columnwidth]{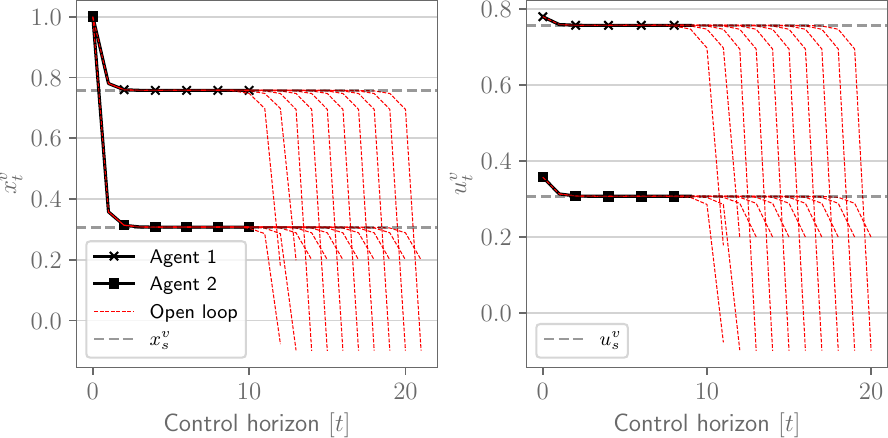}
\caption{Closed-loop trajectories (black-solid) and open-loop predictions (red-dashed) of~\eqref{eq:GNEP_economic_growth} with $N=12$.} \label{fig:ClosedLoopEconGrowth}
\end{figure}

\begin{figure}[t]
\centering
\includegraphics[width=0.5\columnwidth]{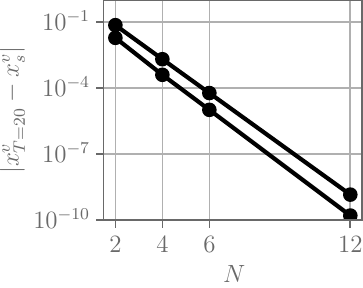}
\caption{Convergence of closed-loop trajectories of~\eqref{eq:GNEP_economic_growth} to $x_s$ as a function of $N$.} \label{fig:HorizonConvergenceEconGrowth}
\end{figure}

\section{Conclusion}
This paper has established closed-loop stability results for nonlinear Receding Horizon Games which are based on dynamic GNE trajectories applied in a receding horizon fashion. Under a strict dissipativity assumption, we show that the turnpike property with respect to the steady-state GNE ensures recursive feasibility of the RHG feedback law. We then construct a Lyapunov candidate $W(x)$ as the sum of a group performance measure and a storage function, and prove practical asymptotic convergence to a neighborhood of the steady-state GNE. When $W(x)$ additionally admits a local $\mathcal{K}$ upper bound in $\|x - x_s\|$, this strengthens to practical asymptotic stability. To the best of our knowledge, these are the first closed-loop stability results in a nonlinear RHG setting without terminal penalties or constraints.

Future work will investigate conditions under which the local $\mathcal{K}$ upper bound on $W(x)$ holds, and whether linear terminal penalties constructed from steady state Lagrange multipliers can yield exact asymptotic stability.

\addtolength{\textheight}{-3cm}   



\bibliography{Dissipativity_Turnpikes_GNEPs}

@Article{diehl2011lyapunov,
  author     = {Moritz Diehl and Rishi Amrit and James B. Rawlings},
  journal    = {{IEEE} Transactions on Automatic Control},
  title      = {{A Lyapunov Function for Economic Optimizing Model Predictive Control}},
  year       = {2011},
  month      = mar,
  number     = {3},
  pages      = {703--707},
  volume     = {56},
  doi        = {10.1109/tac.2010.2101291},
  publisher  = {Institute of Electrical and Electronics Engineers ({IEEE})},
  readstatus = {read},
  relevance  = {relevant},
}

@Article{wang2021game,
  author    = {Mingyu Wang and Zijian Wang and John Talbot and J. Christian Gerdes and Mac Schwager},
  journal   = {{IEEE} Transactions on Robotics},
  title     = {Game-Theoretic Planning for Self-Driving Cars in Multivehicle Competitive Scenarios},
  year      = {2021},
  pages     = {1--13},
  doi       = {10.1109/tro.2020.3047521},
  publisher = {Institute of Electrical and Electronics Engineers ({IEEE})},
}

@Article{mignoni2023distributed,
  author    = {Mignoni, Nicola and Carli, Raffaele and Dotoli, Mariagrazia},
  journal   = {IEEE Transactions on Control Systems Technology},
  title     = {Distributed Noncooperative MPC for Energy Scheduling of Charging and Trading Electric Vehicles in Energy Communities},
  year      = {2023},
  publisher = {IEEE},
}

@InProceedings{paola2018distributed,
  author     = {Antonio De Paola and Filiberto Fele and David Angeli and Goran Strbac},
  booktitle  = {2018 {IEEE} Conference on Decision and Control ({CDC})},
  title      = {Distributed Coordination of Price-Responsive Electric Loads: A Receding Horizon Approach},
  year       = {2018},
  month      = {dec},
  publisher  = {{IEEE}},
  date       = {2018-12},
  doi        = {10.1109/cdc.2018.8619024},
  readstatus = {skimmed},
}

@Article{gu2008differential,
  author    = {Dongbing Gu},
  journal   = {IEEE Trans. Control Syst. Technol.},
  title     = {A Differential Game Approach to Formation Control},
  year      = {2008},
  month     = jan,
  number    = {1},
  pages     = {85--93},
  volume    = {16},
  doi       = {10.1109/tcst.2007.899732},
  publisher = {Institute of Electrical and Electronics Engineers ({IEEE})},
}

@Article{liniger2020noncooperative,
  author     = {Alexander Liniger and John Lygeros},
  journal    = {{IEEE} Transactions on Control Systems Technology},
  title      = {A Noncooperative Game Approach to Autonomous Racing},
  year       = {2020},
  month      = may,
  number     = {3},
  pages      = {884--897},
  volume     = {28},
  doi        = {10.1109/tcst.2019.2895282},
  publisher  = {Institute of Electrical and Electronics Engineers ({IEEE})},
  readstatus = {skimmed},
}

@Misc{hall2025solving,
  author    = {Hall, Sophie and Bemporad, Alberto},
  title     = {Solving Multiparametric Generalized Nash Equilibrium Problems and Explicit Game-Theoretic Model Predictive Control},
  year      = {2025},
  copyright = {Creative Commons Attribution Non Commercial No Derivatives 4.0 International},
  doi       = {10.48550/ARXIV.2512.05505},
  publisher = {arXiv},
}

@Article{benenati2023optimal,
  author   = {Benenati, Emilio and Ananduta, Wicak and Grammatico, Sergio},
  journal  = {IEEE Transactions on Automatic Control},
  title    = {Optimal Selection and Tracking Of Generalized Nash Equilibria in Monotone Games},
  year     = {2023},
  number   = {12},
  pages    = {7644-7659},
  volume   = {68},
  doi      = {10.1109/TAC.2023.3288372},
}

@InProceedings{hall2022receding,
  author    = {Hall, Sophie and Belgioioso, Giuseppe and Liao-McPherson, Dominic and D{\"o}rfler, Florian},
  booktitle = {2022 IEEE 61st Conference on Decision and Control (CDC)},
  title     = {Receding Horizon Games with Coupling Constraints for Demand-Side Management},
  year      = {2022},
  pages     = {3795-3800},
  doi       = {10.1109/CDC51059.2022.9992497},
}

@Article{gruene2016relation,
  author    = {Gr{\"u}ne, Lars and M{\"u}ller, Matthias A.},
  journal   = {Systems \& Control Letters},
  title     = {On the relation between strict dissipativity and turnpike properties},
  year      = {2016},
  issn      = {0167-6911},
  month     = apr,
  pages     = {45--53},
  volume    = {90},
  doi       = {10.1016/j.sysconle.2016.01.003},
  publisher = {Elsevier BV},
}

@Article{hall2025stability,
  author   = {Hall, Sophie and Belgioioso, Giuseppe and D{\"o}rfler, Florian and Liao-McPherson, Dominic},
  journal  = {IEEE Transactions on Automatic Control},
  title    = {Stability Certificates for Receding Horizon Games},
  year     = {2025},
  pages    = {1-8},
  doi      = {10.1109/TAC.2025.3647314},
}

@article{faulwasser2018economic,
  title={Economic nonlinear model predictive control},
  author={Faulwasser, Timm and Gr{\"u}ne, Lars and M{\"u}ller, Matthias A and others},
  journal={Foundations and Trends{\textregistered} in Systems and Control},
  volume={5},
  number={1},
  pages={1--98},
  year={2018},
  publisher={Now Publishers, Inc.}
}

@Article{faulwasser2022turnpike,
  author    = {Faulwasser, Timm and Gr{\"u}ne, Lars},
  journal   = {Handbook of numerical analysis},
  title     = {Turnpike properties in optimal control: An overview of discrete-time and continuous-time results},
  year      = {2022},
  pages     = {367--400},
  volume    = {23},
  publisher = {Elsevier},
}

@Article{gruene2013economic,
  author    = {Gr{\"u}ne, Lars},
  journal   = {Automatica},
  title     = {Economic receding horizon control without terminal constraints},
  year      = {2013},
  month     = {mar},
  number    = {3},
  pages     = {725--734},
  volume    = {49},
  doi       = {10.1016/j.automatica.2012.12.003},
  publisher = {Elsevier {BV}},
}

@InCollection{facchinei2009nash,
  author    = {Francisco Facchinei and Jong Pang},
  booktitle = {Convex Optimization in Signal Processing and Communications},
  publisher = {Cambridge University Press},
  title     = {{Nash equilibria: the variational approach}},
  year      = {2009},
  chapter   = {12},
  editor    = {Daniel P. Palomar and Yonina C. Eldar},
  month     = {dec},
  pages     = {443--493},
  doi       = {10.1017/cbo9780511804458.013},
  place     = {Cambridge},
}

@Article{facchinei2009generalized,
  author    = {Facchinei, Francisco and Kanzow, Christian},
  journal   = {Annals of Operations Research},
  title     = {Generalized Nash Equilibrium Problems},
  year      = {2009},
  issn      = {1572-9338},
  month     = nov,
  number    = {1},
  pages     = {177--211},
  volume    = {175},
  doi       = {10.1007/s10479-009-0653-x},
  publisher = {Springer Science and Business Media LLC},
}

@Article{angeli2012average,
  author     = {Angeli, David and Amrit, Rishi and Rawlings, James B.},
  journal    = {IEEE Transactions on Automatic Control},
  title      = {On Average Performance and Stability of Economic Model Predictive Control},
  year       = {2012},
  number     = {7},
  pages      = {1615-1626},
  volume     = {57},
  doi        = {10.1109/TAC.2011.2179349},
  readstatus = {read},
}

@Article{lecleach2022algames,
  author    = {Le Cleac'h, Simon and Schwager, Mac and Manchester, Zachary},
  journal   = {Autonomous Robots},
  title     = {ALGAMES: a fast augmented Lagrangian solver for constrained dynamic games},
  year      = {2022},
  issn      = {1573-7527},
  number    = {1},
  pages     = {201--215},
  volume    = {46},
  doi       = {10.1007/s10514-021-10024-7},
  publisher = {Springer},
  refid     = {Le Cleac'h2022},
  url       = {https://doi.org/10.1007/s10514-021-10024-7},
}

@InBook{carlson1995turnpike,
  author    = {Carlson, D. and Haurie, A.},
  pages     = {353--376},
  publisher = {Birkh{\"a}user Boston},
  title     = {A Turnpike Theory for Infinite Horizon Open-Loop Differential Games with Decoupled Controls},
  year      = {1995},
  isbn      = {9781461242741},
  booktitle = {New Trends in Dynamic Games and Applications},
  doi       = {10.1007/978-1-4612-4274-1_18},
}

@Article{carlson1996turnpike,
  author    = {Carlson, D. and Haurie, A.},
  journal   = {SIAM Journal on Control and Optimization},
  title     = {A Turnpike Theory for Infinite-Horizon Open-Loop Competitive Processes},
  year      = {1996},
  issn      = {1095-7138},
  month     = jul,
  number    = {4},
  pages     = {1405--1419},
  volume    = {34},
  doi       = {10.1137/s0363012994265158},
  publisher = {Society for Industrial \& Applied Mathematics (SIAM)},
}

@InBook{carlson2000infinite,
  author    = {Carlson, Dean A. and Haurie, Alain B.},
  pages     = {195--212},
  publisher = {Birkh{\"a}user Boston},
  title     = {Infinite Horizon Dynamic Games with Coupled State Constraints},
  year      = {2000},
  isbn      = {9781461213369},
  booktitle = {Advances in Dynamic Games and Applications},
  doi       = {10.1007/978-1-4612-1336-9_10},
}

@Article{hall2024receding,
  author    = {Hall, Sophie and Guerrini, Laura and D{\"o}rfler, Florian and Liao-McPherson, Dominic},
  journal   = {IFAC-PapersOnLine},
  title     = {Receding Horizon Games for Modeling Competitive Supply Chains},
  year      = {2024},
  issn      = {2405-8963},
  number    = {18},
  pages     = {8--14},
  volume    = {58},
  doi       = {10.1016/j.ifacol.2024.09.002},
  publisher = {Elsevier BV},
}

@Article{kulkarni2012variational,
  author     = {Ankur A. Kulkarni and Uday V. Shanbhag},
  journal    = {Automatica},
  title      = {On the variational equilibrium as a refinement of the generalized {N}ash equilibrium},
  year       = {2012},
  issn       = {0005-1098},
  number     = {1},
  pages      = {45-55},
  volume     = {48},
  doi        = {https://doi.org/10.1016/j.automatica.2011.09.042},
  readstatus = {skimmed},
  relevance  = {relevant}
}

@InProceedings{hall2025limits,
  author    = {Hall, Sophie and D{\"o}rfler, Florian and Nax, Heinrich H. and Bolognani, Saverio},
  booktitle = {2025 IEEE 64th Conference on Decision and Control (CDC)},
  title     = {The Limits of ``Fairness" of the Variational Generalized Nash Equilibrium},
  year      = {2025},
  pages     = {5354-5360},
  doi       = {10.1109/CDC57313.2025.11312061},
}

@Article{benenati2025linear,
  author   = {Benenati, Emilio and Grammatico, Sergio},
  journal  = {IEEE Transactions on Automatic Control},
  title    = {Linear-Quadratic Dynamic Games as Receding-Horizon Variational Inequalities},
  year     = {2025},
  pages    = {1-16},
  doi      = {10.1109/TAC.2025.3632150},
}

@Article{atzeni2013demand,
  author     = {Atzeni, Italo and Ord{\'o}{\~n}ez, Luis G. and Scutari, Gesualdo and Palomar, Daniel P. and Fonollosa, Javier Rodr{\'i}guez},
  journal    = {IEEE Transactions on Smart Grid},
  title      = {Demand-side management via distributed energy generation and storage optimization},
  year       = {2013},
  month      = jun,
  number     = {2},
  pages      = {866-876},
  volume     = {4},
  doi        = {10.1109/TSG.2012.2206060},
  publisher  = {Institute of Electrical and Electronics Engineers ({IEEE})},
  readstatus = {read},
}

@Article{li2025turnpike,
  author        = {Li, Xun and Wu, Fan and Zhang, Xin},
  title         = {Turnpike properties for zero-sum stochastic linear quadratic differential games of Markovian regime switching system},
  year          = {2025},
  month         = sep,
  archiveprefix = {arXiv},
  copyright     = {arXiv.org perpetual, non-exclusive license},
  doi           = {10.48550/ARXIV.2509.09358},
  eprint        = {2509.09358},
  primaryclass  = {math.OC},
  publisher     = {arXiv},
}

@Article{cohen2025turnpike,
  author        = {Cohen, Asaf and Jian, Jiamin},
  title         = {Turnpike properties in linear quadratic Gaussian N-player differential games},
  year          = {2025},
  month         = jul,
  archiveprefix = {arXiv},
  copyright     = {Creative Commons Attribution 4.0 International},
  doi           = {10.48550/ARXIV.2507.11632},
  eprint        = {2507.11632},
  primaryclass  = {math.OC},
  publisher     = {arXiv},
}

@Article{ersland2025long,
  author        = {Ersland, Olav and Jakobsen, Espen Robstad and Porretta, Alessio},
  title         = {Long time behaviour of Mean Field Games with fractional diffusion},
  year          = {2025},
  month         = may,
  archiveprefix = {arXiv},
  copyright     = {Creative Commons Attribution 4.0 International},
  doi           = {10.48550/ARXIV.2505.06183},
  eprint        = {2505.06183},
  primaryclass  = {math.AP},
  publisher     = {arXiv},
}

@Article{fedorov2025studying,
  author    = {Fedorov, F. A.},
  journal   = {Moscow University Computational Mathematics and Cybernetics},
  title     = {Studying the Well-Posedness of the Boundary Value Problem for a System of Riccati Type Equations Based on the Concept of Mean Field Games},
  year      = {2025},
  issn      = {1934-8428},
  month     = jun,
  number    = {2},
  pages     = {150--164},
  volume    = {49},
  doi       = {10.3103/s0278641925700086},
  publisher = {Allerton Press},
}

@Article{cirant2021long,
  author    = {Cirant, Marco and Porretta, Alessio},
  journal   = {ESAIM: Control, Optimisation and Calculus of Variations},
  title     = {Long time behavior and turnpike solutions in mildly non-monotone mean field games},
  year      = {2021},
  issn      = {1262-3377},
  pages     = {86},
  volume    = {27},
  doi       = {10.1051/cocv/2021077},
  editor    = {Buttazzo, G. and Casas, E. and de Teresa, L. and Glowinski, R. and Leugering, G. and Tr{\'e}lat, E. and Zhang, X.},
  publisher = {EDP Sciences},
}

@Article{carmona2024leveraging,
  author    = {Carmona, Ren{\'e} A. and Zeng, Claire},
  journal   = {IEEE Open Journal of Control Systems},
  title     = {Leveraging the Turnpike Effect for Mean Field Games Numerics},
  year      = {2024},
  issn      = {2694-085X},
  pages     = {389--404},
  volume    = {3},
  doi       = {10.1109/ojcsys.2024.3419642},
  publisher = {Institute of Electrical and Electronics Engineers (IEEE)},
}

@Article{fershtman1986turnpike,
  author    = {Fershtman, Chaim and Muller, Eitan},
  journal   = {Journal of Economic Theory},
  title     = {Turnpike properties of capital accumulation games},
  year      = {1986},
  issn      = {0022-0531},
  month     = feb,
  number    = {1},
  pages     = {167--177},
  volume    = {38},
  doi       = {10.1016/0022-0531(86)90094-3},
  publisher = {Elsevier BV},
}

@Article{Mckenzie76,
  Title                    = {Turnpike theory},
  Author                   = {McKenzie, L.W.},
  Journal                  = {Econometrica: Journal of the Econometric Society},
  Year                     = {1976},
  Number                   = {5},
  Pages                    = {841--865},
  Volume                   = {44},

  Publisher                = {JSTOR}
}

@Article{liaomcpherson2019regularized,
  author    = {Dominic Liao-McPherson and Mike Huang and Ilya Kolmanovsky},
  journal   = {IEEE Transactions on Automatic Control},
  title     = {A Regularized and Smoothed Fischer{\textendash}Burmeister Method for Quadratic Programming With Applications to Model Predictive Control},
  year      = {2019},
  issn      = {2334-3303},
  month     = jul,
  number    = {7},
  pages     = {2937--2944},
  volume    = {64},
  doi       = {10.1109/tac.2018.2872201},
  publisher = {Institute of Electrical and Electronics Engineers ({IEEE})},
}

@Article{epfl:faulwasser15h,
  author  = {Faulwasser, T. and Korda, M. and Jones, C.N. and Bonvin, D.},
  journal = {Automatica},
  title   = {On Turnpike and Dissipativity Properties of Continuous-Time Optimal Control Problems},
  year    = {2017},
  pages   = {297-304},
  volume  = {81},
  doi     = {10.1016/j.automatica.2017.03.012},
}

@Article{Kellett14,
  Title                    = {A compendium of comparison function results},
  Author                   = {Kellett, C.M.},
  Journal                  = {Mathematics of Control, Signals, and Systems},
  Year                     = {2014},
  Number                   = {3},
  Pages                    = {339--374},
  Volume                   = {26},

  Publisher                = {Springer}
}

@Article{Willems72a,
  author    = {Willems, J.C.},
  title     = {Dissipative dynamical systems part I: General theory},
  number    = {5},
  pages     = {321--351},
  volume    = {45},
  journal   = {Archive for Rational Mechanics and Analysis},
  publisher = {Springer},
  year      = {1972},
}

@Article{hall2025system,
  author        = {Hall, Sophie and D{\"o}rfler, Florian and Faulwasser, Timm},
  title         = {System-Theoretic Analysis of Dynamic Generalized Nash Equilibrium Problems -- Turnpikes and Dissipativity},
  year          = {2025},
  month         = oct,
  archiveprefix = {arXiv},
  copyright     = {Creative Commons Attribution Non Commercial No Derivatives 4.0 International},
  doi           = {10.48550/ARXIV.2510.21556},
  eprint        = {2510.21556},
  primaryclass  = {eess.SY},
  publisher     = {arXiv},
}

@Article{gruene2014asymptotic,
  author    = {Gr{\"u}ne, Lars and Stieler, Marleen},
  journal   = {Journal of Process Control},
  title     = {Asymptotic stability and transient optimality of economic {MPC} without terminal conditions},
  year      = {2014},
  month     = {aug},
  number    = {8},
  pages     = {1187--1196},
  volume    = {24},
  doi       = {10.1016/j.jprocont.2014.05.003},
  publisher = {Elsevier {BV}},
}

@Article{bemporad2025nashopt,
  author  = {A. Bemporad},
  journal = {arXiv preprint 2512.23636},
  title   = {{NashOpt}: A {Python} Library for Computing Generalized {Nash} Equilibria and Game Design},
  year    = {2025},
  note    = {\url{https://github.com/bemporad/nashopt}},
}

@InProceedings{faulwasser15approx,
  Title                    = {On the Design of Economic {NMPC} based on Approximate Turnpike Properties},
  Author                   = {Faulwasser, T. and Bonvin, D.},
  Booktitle                = {Proc. of 54th IEEE Conference on Decision and Control},
  Year                     = {2015},

  Address                  = {Osaka, Japan},
  Month                    = {December 15-18},
  Pages                    = {4964 - 4970},

  Doi                      = {10.1109/CDC.2015.7402995}
}

\end{document}